\newtheorem{theorem}{Theorem}
\newtheorem{lemma}{Lemma}
\newtheorem*{lemma*}{Lemma}
\newtheorem{definition}{Definition}
\newtheorem{remark}{Remark}
\newtheorem*{proposition*}{Proposition}
\newcommand{\R}{\mathbb{R}}
\newcommand{\C}{\mathcal{C}}
\definecolor{darkblue}{RGB}{0,0,102}
\definecolor{lightblue}{RGB}{77,77,148}
\definecolor{gold}{RGB}{234, 170, 0}
\definecolor{metallic_gold}{RGB}{139, 111, 78}
\newcommand{\mb}[1]{\mathbf{ #1 }}
\newcommand{\bs}[1]{\boldsymbol{ #1 }}
\DeclareMathOperator*{\argmin}{argmin}
\DeclareMathOperator{\diag}{diag}
\newcommand{\lmat }{\begin{bmatrix}}
\newcommand{\rmat}{\end{bmatrix}}
\renewcommand{\P}{\mathbb{P}} 
\newcommand{\E}{\mathbb{E}} 
\begin{document}

\title{Robust Safety under Stochastic Uncertainty with Discrete-Time Control Barrier Functions}


\author{\authorblockN{Ryan K. Cosner$^1$, Preston Culbertson$^1$, Andrew J. Taylor$^2$, and Aaron D. Ames$^1$}
\vspace{0.7em}
\authorblockA{
$^1$Department of Mechanical and Civil Engineering, 
$^2$Computing and Mathematical Sciences Department\\
California Institute of Technology, 
Pasadena, California 91125\\
\{rkcosner, pculbert, ajtaylor, ames\}@caltech.edu}
\vspace{-2.5em}
}



%

\maketitle

\begin{abstract}
Robots deployed in unstructured, real-world environments operate under considerable uncertainty due to imperfect state estimates, model error, and disturbances. Given this real-world context, the goal of this paper is to develop controllers that are provably safe under uncertainties.  To this end, we leverage Control Barrier Functions (CBFs) which guarantee that a robot remains in a ``safe set'' during its operation---yet CBFs (and their associated guarantees) are traditionally studied in the context of continuous-time, deterministic systems with bounded uncertainties.    
In this work, we study the safety properties of discrete-time CBFs (DTCBFs) for systems with discrete-time dynamics and unbounded stochastic disturbances. Using tools from martingale theory, we  develop probabilistic bounds for the safety (over a finite time horizon) of systems whose dynamics satisfy the discrete-time barrier function condition in expectation, and analyze the effect of Jensen's inequality on DTCBF-based controllers. Finally, we present several examples of our method synthesizing safe control inputs for systems subject to significant process noise, including an inverted pendulum, a double integrator, and a quadruped locomoting on a narrow path. 

\end{abstract}
\IEEEpeerreviewmaketitle

\section{Introduction}

Safety is critical for a multitude of modern robotic systems: from autonomous vehicles, to medical and assistive robots, to aerospace systems. When deployed in the real world, these systems face sources of uncertainty such as imperfect perception, approximate models of the world and the system, and unexpected disturbances. In order to achieve the high degrees of safety necessary for these robots to be deployed at scale, it is essential that controllers can not only guarantee safe behavior, but also provide robustness to these uncertainties. 

In the field of control theory, safety is often defined as the forward invariance of a ``safe set'' \cite{ames2016control}. In this view, a closed-loop system is considered safe if all trajectories starting inside the safe set will remain in this set for all time. Several tools exist for generating controllers which can guarantee this forward-invariance property, including Control Barrier Functions (CBFs) \cite{ames_control_2019}, reachability-based controllers \cite{bansal2017hamilton}, and state-constrained Model-Predictive Controller (MPC) approaches \cite{hewing2020learning}. 
Considerable advancements have been made in guaranteeing safety or stability in the presence of bounded uncertainties \cite{zhou1998essentials, blanchini2008set, aubin2011viability, sontag2008input,kolathaya2018input,alan2021safe}.  Yet less attention has been paid to the case of unbounded uncertainties, where the aforementioned methods generally do not apply. 

Obtaining robust safety in the case of unbounded disturbances is particularly important when considering systems subject to stochastic disturbances, since these disturbances are often modeled as continuous random variables with unbounded support (e.g., zero-mean, additive Gaussian noise); for such systems, it is impossible to give an absolute bound on the disturbance magnitude. Existing methods for unbounded, random disturbances fall into two categories. The first is to impose step-wise chance constraints on a given safety criterion (e.g., a state constraint in MPC \cite{hewing2020learning} or CBF-based controllers \cite{ahmadi_risk-averse_2022}), which in turn provide one-step safety guarantees. The other class of approaches \cite{kushner1967stochastic, prajna2004stochastic, santoyo_verification_2019, clark_control_2019, steinhardt2012finite} use Lyapunov or barrier function techniques to provide bounds on the safety probabilities for trajectories over a fixed time horizon; existing approaches, however, often assume the presence of a stabilizing controller, or model the system in continuous-time (i.e., assume the controller has, in effect, infinite bandwidth).

\begin{figure}[t]
    \centering
    \includegraphics[width=\linewidth]{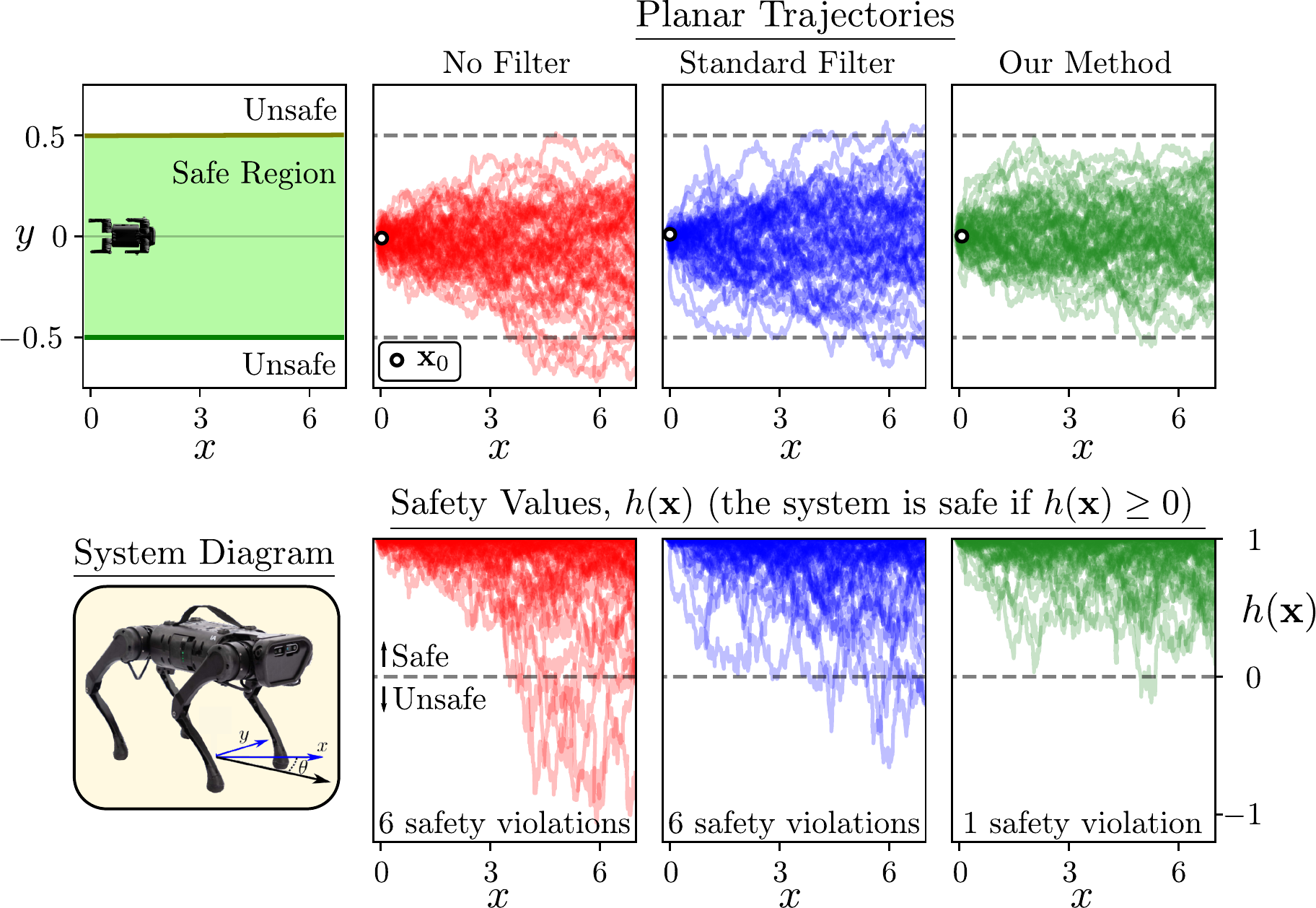}
    \caption{Safety of a simulated quadrupedal robot locomoting on a narrow path for a variety of controllers. \textbf{(Top Left)} The safe region that the quadruped is allowed to traverse. \textbf{(Bottom Left)} A system diagram depicting the states of the quadruped $\lmat x, y, \theta \rmat^\top$. \textbf{(Top Right)} 50 trajectories for 3 controllers: one without any knowledge of safety  ($\mb{k}_{\textrm{nom}}$), one with a standard safety filter \eqref{eq:dtcbfop}, and finally our method which accounts for stochasticity \eqref{eq:jed}. \textbf{(Bottom Right)} Plots of $h(\mb{x})$, a scalar value representing safety. The system is safe (i.e., in the green safe region) if $h(\mb{x}) \geq 0 $. }
    \label{fig:hero_fig}
        \vspace{-1.1cm}
\end{figure}

In order to best represent the uncertainty that might appear from sources such as discrete-time perception errors or sampled-data modeling errors, we focus our work on generating probabilistic bounds of safety for discrete-time (DT) stochastic systems. While MPC state constraints are generally enforced in discrete time, CBFs, normally applied in continuous time, have a discrete-time counterpart (DTCBFs) that were first introduced in \cite{agrawal2017discrete} and have gained popularity due to their compatibility with planners based on MPC \cite{zeng2021safety,liu2022iterative,wills2004barrier}, reinforcement learning \cite{cheng2019end}, and  Markov decision processes \cite{ahmadi2019safe}. 
In a stochastic setting, martingale-based techniques have been leveraged to establish safety guarantees \cite{santoyo_verification_2019,steinhardt2012finite}, yet these works have limited utility when analyzing the safety of discrete-time CBF-based controllers. 

In particular, the ``c-martingale'' condition used in \cite{steinhardt2012finite} does not admit a multiplicative scaling of the barrier function, and therefore, at best, provides a weak worst-case safety bound for CBF-based controllers that grows linearly in time.  The work of \cite{santoyo_verification_2019} (which builds upon \cite{kushner1967stochastic}, as does this paper) is largely focused on offline control synthesis to achieve a desired safety bound (as opposed to the online, optimization-based control studied in this work). Also, the method proposed in \cite{santoyo_verification_2019} can only generate discrete-time controllers for affine barriers, which severely limits its applicability to general barrier functions. Both papers also depend on  sum-of-squares (SoS) programming \cite{papachristodoulou2005tutorial} for control synthesis/system verification, thereby requiring an offline step that scales poorly with the state dimension.  The goal of this paper is to extend the results of \cite{kushner1967stochastic} in a different direction, and thereby enable the synthesis of online controllers that can be realized on robotic systems.


The main contribution of this paper is to apply martingale-based probability bounds in the context of discrete-time CBFs to guarantee robust safety under stochastic uncertainty.  To this end, we leverage the bounds originally presented in the seminal work by Kushner \cite{kushner1967stochastic}.  
Our first key contribution is the translation of these results from a Lyapunov setting to a CBF one.  To this end, we present a new proof of the results in \cite{kushner1967stochastic} which we believe to be more complete and intuitive and which relates to the existing results of Input-to-State Safety (ISSf) for systems with bounded uncertainties \cite{kolathaya2018input}. 
Furthermore, we present a method (based on Jensen's inequality) to account for the effects of process noise on a DTCBF-based controller. Finally, we apply this method to a variety of systems in simulation to analyze the tightness of our bound and demonstrate its utility. These experiments range from simple examples that illustrate the core mathematics---a single and double integrator and a pendulum---to a high fidelity simulation of a quadrupedal robot locomoting along a narrow path with the uncertainty representing the gap between the simplified and full-order dynamics models.

\section{Background}

In this section we provide a review of safety for discrete-time nonlinear systems via  control barrier functions (CBFs), and review tools from probability theory useful for studying systems with stochastic disturbances.

\subsection{Safety of Discrete-time Systems}

Consider a discrete-time (DT) nonlinear system with dynamics given by: 
\begin{align}
    \mb{x}_{k+1} = \mb{F}(\mb{x}_k, \mb{u}_k), \quad \forall k \in \mathbb{N},  \label{eq:dt_dyn}
\end{align}
with state $\mb{x}_k \in \R^n$, input $\mb{u}_k \in \R^m$, and continuous dynamics $\mb{F}: \R^{n} \times \R^{m} \to \R^n$. A continuous state-feedback controller $\mb{k}:\R^n\to\R^m$ yields the DT closed-loop system: 
\begin{align}
    \mb{x}_{k+1} = \mb{F}(\mb{x}_k, \mb{k}(\mb{x}_k)), \quad \forall k \in \mathbb{N}. \label{eq:dt_autonomous}
\end{align}

We formalize the notion of safety for systems of this form using the concept of forward invariance: 

\begin{definition}[Forward Invariance \& Safety \cite{blanchini2008set}]
A set $\mathcal{C}\subset \R^n $ is \textit{forward invariant} for the system \eqref{eq:dt_autonomous} if $\mb{x}_0 \in \mathcal{C}$ implies that $\mb{x}_k \in \mathcal{C}$ for all $k \in \mathbb{N}$. In this case, we call the system \eqref{eq:dt_autonomous} \textit{safe} with respect to the set $\mathcal{C}$. 
\end{definition}

Discrete-time barrier functions (DTBFs) are a tool for guaranteeing the safety of discrete-time systems. Consider a set $\mathcal{C} \triangleq \left \{ \mb{x} \in \R^n \mid h(\mb{x}) \geq 0 \right \}$ expressed as the 0-superlevel set of a continuous function $h:\R^n\to\R$. We refer to such a function $h$ as a DTBF\footnote{ The state constraint $\mb{x}_k \in \mathcal{C}$, when expressed as $h(\mb{x}_k) \geq 0 $, is the special case of a DTBF with $\alpha = 0 $. } if it satisfies the following properties:

\begin{definition}[Discrete-Time Barrier Function (DTBF) \cite{agrawal2017discrete}]
    Let $\mathcal{C}\subset \R^n $ be the $0$-superlevel set of a continuous function $h:\R^n \to \R$. The function $h$ is a \textit{discrete-time barrier function} (DTBF) for \eqref{eq:dt_autonomous} on $\mathcal{C}$ if there exists an $\alpha \in [0, 1] $ such that for all $\mb{x} \in \R^n$, we have that: 
    \begin{align}
        h(\mb{F}(\mb{x}, \mb{k}(\mb{x})))   \geq \alpha h(\mb{x}). \label{eq:dtbf_constraint}  
    \end{align}
\end{definition}

\noindent This inequality mimics that of discrete-time Lyapunov functions \cite{bof2018lyapunov}, and similarly regulates the evolution of $h$ based on its previous value. DTBFs serve as a certificate of forward invariance as captured in the following theorem: 

\begin{theorem}[\cite{agrawal2017discrete}]
    Let $\mathcal{C}\subset \R^n $ be the $0$-superlevel set of a continuous function $h:\R^n \to \R$. If $h$ is a DTBF for \eqref{eq:dt_autonomous} on $\mathcal{C}$, then the system \eqref{eq:dt_autonomous} is safe with respect to the set $\mathcal{C}$. 
\end{theorem}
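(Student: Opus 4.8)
The plan is to argue by induction on the time index $k$, using the DTBF inequality \eqref{eq:dtbf_constraint} as the inductive step. The whole proof reduces to observing that the condition $h(\mb{F}(\mb{x}, \mb{k}(\mb{x}))) \geq \alpha h(\mb{x})$, combined with $\alpha \geq 0$, preserves nonnegativity of $h$ along closed-loop trajectories.

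Concretely, I would fix an arbitrary initial condition $\mb{x}_0 \in \mathcal{C}$, which by definition of $\mathcal{C}$ means $h(\mb{x}_0) \geq 0$, and set up the induction hypothesis that $\mb{x}_k \in \mathcal{C}$, i.e. $h(\mb{x}_k) \geq 0$. For the inductive step, I would apply \eqref{eq:dtbf_constraint} at the point $\mb{x}_k$ together with the closed-loop update $\mb{x}_{k+1} = \mb{F}(\mb{x}_k, \mb{k}(\mb{x}_k))$ from \eqref{eq:dt_autonomous} to obtain $h(\mb{x}_{k+1}) \geq \alpha h(\mb{x}_k)$. Since $\alpha \geq 0$ and $h(\mb{x}_k) \geq 0$ by hypothesis, the right-hand side is a product of two nonnegative numbers, hence $h(\mb{x}_{k+1}) \geq 0$, which is exactly $\mb{x}_{k+1} \in \mathcal{C}$.

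Having closed the induction, I would conclude that $\mb{x}_k \in \mathcal{C}$ for all $k \in \mathbb{N}$, which is precisely the definition of forward invariance of $\mathcal{C}$, and therefore the system \eqref{eq:dt_autonomous} is safe with respect to $\mathcal{C}$.

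There is no genuine obstacle here; the only subtlety worth flagging is that the argument relies crucially on $\alpha$ being nonnegative (so that multiplying by $\alpha$ cannot flip the sign of $h$), which is guaranteed by the hypothesis $\alpha \in [0,1]$ in the definition of a DTBF. The upper bound $\alpha \leq 1$ plays no role in establishing invariance; it is only relevant for the rate of decay of the barrier value and would matter in a quantitative or stochastic refinement rather than in this qualitative safety claim.
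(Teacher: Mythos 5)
Your induction is correct and is essentially the same argument the paper sketches informally (the paper notes that $h(\mb{x}_k)$ can decay no faster than the geometric sequence $\alpha^k h(\mb{x}_0) \geq 0$, which is exactly your step-by-step nonnegativity preservation, unrolled). Your observation that only $\alpha \geq 0$ is needed for invariance is also accurate.
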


\noindent Intuitively, the value of $h(\mb{x}_k)$ can only decay as fast as the geometric sequence $\alpha^kh(\mb{x}_0)$, which is lower-bounded by 0, thus ensuring the safety (i.e., forward invariance) of $\C$.


Discrete-time control barrier functions (DTCBFs) provide a tool for constructively synthesizing controllers that yield closed-loop systems that possess a DTBF:

\begin{definition}[Discrete-Time Control Barrier Function (DTCBF) \cite{agrawal2017discrete}]
    Let $\mathcal{C}\subset \R^n $ be the $0$-superlevel set of a continuous function $h:\R^n \to \R$. The function $h$ is a \textit{discrete-time control barrier function} (DTCBF) for \eqref{eq:dt_dyn} on $\mathcal{C}$ if there exists an $\alpha \in [0, 1] $ such that for each $\mb{x} \in \R^n$, there exists a $\mb{u}\in\R^m$ such that: 
    \begin{align}
        h(\mb{F}(\mb{x}, \mb{u}))   \geq \alpha h(\mb{x}). \label{eq:dtcbf_constraint}  
    \end{align}
\end{definition}

Given a CBF $h$ for \eqref{eq:dt_dyn} and a corresponding $\alpha\in[0,1]$, we define the point-wise set of control values:
\begin{equation}
    \mathscr{K}_{\rm CBF}(\mb{x}) = \left\{ \mb{u}\in\R^m \mid h(\mb{F}(\mb{x},\mb{u})) \geq \alpha h(\mb{x})\right\}.
\end{equation}
This yields the following result:

\begin{theorem}[\cite{agrawal_constructive_2022}] \label{thm:dtcbf}
    Let $\mathcal{C}\subset \R^n $ be the $0$-superlevel set of a continuous function $h:\R^n \to \R$. If $h$ is a DTCBF for \eqref{eq:dt_dyn} on $\mathcal{C}$, then the set $\mathscr{K}_{\rm CBF}(\mb{x})$ is non-empty for all $\mb{x}\in\R^n$, and for any continuous state-feedback controller $\mb{k}$ with $\mb{k}(\mb{x})\in \mathscr{K}_{\rm CBF}(\mb{x})$ for all $\mb{x}\in\R^n$, the function $h$ is a DTBF for \eqref{eq:dt_autonomous} on $\C$.
\end{theorem}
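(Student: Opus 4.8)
The plan is to observe that both claims follow almost directly by unwinding the definitions of a DTCBF (Eq.~\eqref{eq:dtcbf_constraint}), the set $\mathscr{K}_{\rm CBF}(\mb{x})$, and a DTBF (Eq.~\eqref{eq:dtbf_constraint}); the content of the theorem is really a bookkeeping exercise connecting the pointwise existence of a safe input to the existence of a controller that renders $h$ a DTBF for the closed loop.

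First I would establish non-emptiness. By hypothesis $h$ is a DTCBF for \eqref{eq:dt_dyn}, so there is a fixed $\alpha \in [0,1]$ such that for every $\mb{x} \in \R^n$ there exists $\mb{u} \in \R^m$ with $h(\mb{F}(\mb{x},\mb{u})) \geq \alpha h(\mb{x})$. Fixing an arbitrary $\mb{x}$, this $\mb{u}$ satisfies exactly the membership condition defining $\mathscr{K}_{\rm CBF}(\mb{x})$, so $\mb{u} \in \mathscr{K}_{\rm CBF}(\mb{x})$ and hence $\mathscr{K}_{\rm CBF}(\mb{x}) \neq \emptyset$. Since $\mb{x}$ was arbitrary, the set is non-empty everywhere, giving the first claim.

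Next I would verify the DTBF property for a selected controller. Let $\mb{k}$ be any continuous state-feedback controller with $\mb{k}(\mb{x}) \in \mathscr{K}_{\rm CBF}(\mb{x})$ for all $\mb{x}$. By the definition of $\mathscr{K}_{\rm CBF}(\mb{x})$, this membership gives $h(\mb{F}(\mb{x},\mb{k}(\mb{x}))) \geq \alpha h(\mb{x})$ for every $\mb{x} \in \R^n$, which is precisely the DTBF inequality \eqref{eq:dtbf_constraint} for the closed-loop system \eqref{eq:dt_autonomous}. Because $h$ is continuous and $\mathcal{C}$ is its $0$-superlevel set by assumption, and because $\alpha \in [0,1]$ is the same constant supplied by the DTCBF hypothesis, every requirement in the definition of a DTBF is met; thus $h$ is a DTBF for \eqref{eq:dt_autonomous} on $\C$.

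I do not expect a genuine obstacle: the result is essentially definitional, and the key move is simply to note that the same $\alpha$ certifying the DTCBF also certifies the DTBF of the closed loop. The one point worth explicit mention is that continuity of the resulting controller is \emph{assumed} rather than constructed---the theorem makes no claim that a continuous selection $\mb{k}$ from the set-valued map $\mb{x} \mapsto \mathscr{K}_{\rm CBF}(\mb{x})$ must exist, which would be the genuinely nontrivial question (requiring, e.g., a continuous-selection argument) and is sidestepped by quantifying only over controllers already known to be continuous.
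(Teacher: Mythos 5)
Your proof is correct. Note that the paper itself states this result as cited background (from \cite{agrawal_constructive_2022}) and gives no proof of its own, so there is nothing to compare against; your definitional unwinding---non-emptiness of $\mathscr{K}_{\rm CBF}(\mb{x})$ from the pointwise existence clause of the DTCBF definition, and the DTBF inequality from membership $\mb{k}(\mb{x})\in\mathscr{K}_{\rm CBF}(\mb{x})$ with the same $\alpha$---is exactly the standard argument. Your closing remark that continuity of $\mb{k}$ is hypothesized rather than obtained via a continuous-selection theorem correctly identifies the only genuinely nontrivial issue the statement sidesteps.
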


Given a continuous nominal controller $\mb{k}_{\rm nom}:\R^n\times \mathbb{N}\to\R^m$ and a DTCBF $h$ for \eqref{eq:dt_dyn} on $\C$, a controller $\mb{k}$ satisfying $\mb{k}(\mb{x},k)\in \mathscr{K}_{\rm CBF}(\mb{x})$ for all $\mb{x}\in\R^n$ and $k \in \mathbb{N}$ can be specified via the following optimization problem:
\begin{align}
\label{eq:dtcbfop}
\mb{k}(\mb{x}) = \argmin_{\mb{u}\in\R^m}&\quad \Vert \mb{u}-\mb{k}_{\rm nom}(\mb{x},k) \Vert^2 \tag{DTCBF-OP}\\ \textrm{s.t.}& \quad h(\mb{F}(\mb{x},\mb{u})) \geq \alpha h(\mb{x}).\nonumber
\end{align}

We note that unlike the affine inequality constraint that arises with continuous-time CBFs \cite{ames_control_2019}, the DTCBF inequality constraint \eqref{eq:dtcbf_constraint} is not necessarily convex with respect to the input, preventing it from being integrated into a convex optimization-based controller. To solve this issue, it is often assumed that the function $h\circ\mb{F}:\R^n\times\R^m\to\R$ is concave with respect to its second argument \cite{agrawal2017discrete, ahmadi2019safe, zeng2021safety}. This assumption was shown to be well motivated for concave $h$ \cite{taylor_safety_2022}.

\subsection{Stochastic Preliminaries}

We now review tools from probability theory that will allow us to utilize information about the distribution of a stochastic disturbance signal in constructing a notion of stochastic safety and corresponding safety-critical controllers. We choose to provide this background material at the level necessary to understand our later constructions of stochastic safety and safety-critical controllers, but refer readers to \cite{grimmett2020probability} for a precise measure-theoretic presentation of the following concepts.

The key tool underlying our construction of a notion of stochastic safety is a nonnegative supermartingale, a specific type of expectation-governed random process: 

\begin{definition}
Let $\mb{x}_k$ be a sequence of random variables that take values in $\R^n$, $W:\R^n\times\mathbb{N}\to\R$, and suppose that $\mathbb{E}\big[\lvert W(\mb{x}_k,k) \rvert\big] <\infty$ for $k\in \mathbb{N}$. The process $W_k\triangleq W(\mb{x}_k,k)$ is a supermartingale if:
\begin{equation}
    \label{eq:supermartingale}
        \mathbb{E}[ W_{k+1} \mid \mb{x}_{0:k}] \leq W_k~\textrm{almost~surely~for~all~}k\in\mathbb{N},
\end{equation}
    where $\mb{x}_{0:k}$ indicates the random variables $\left\{\mb{x}_0, \mb{x}_1, \ldots, \mb{x}_k\right\}$. If, additionally, $W_k\geq 0$ for all $k\in\mathbb{N}$, $W_k$ is a nonnegative supermartingale. If the process is non-decreasing in expectation, the process $W_k$ is a submartingale. If the inequality \eqref{eq:supermartingale} holds with equality, the process $W_k$ is a martingale. 
\end{definition}

An important result from martingale theory that we will use to develop probabilistic safety guarantees is \textit{Ville's inequality}, which allows us to bound the probability that a nonnegative supermartingale will rise above a certain value: 

\begin{theorem}[Ville's Inequality \cite{ville1939etude}]
    Let $W_k$ be a nonnegative supermartingale. Then for all $\lambda\in\R_{>0}$, 
    \begin{align}
        \P \left\{ \sup_{k\in \mathbb{N}} W_k > \lambda  \right\} \leq \frac{\mathbb{E}[W_0]}{\lambda}.
        \label{eq:ville}
    \end{align}
\end{theorem}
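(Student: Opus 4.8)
The plan is to prove Ville's inequality through a first-passage stopping-time argument: reduce the infinite-horizon supremum to a finite-horizon maximal bound, and then pass to the limit. First I would introduce the first-passage time $\tau \triangleq \inf\{k\in\mathbb{N} : W_k > \lambda\}$, with the convention $\tau = \infty$ when $W_k \leq \lambda$ for every $k$, and observe that the target event decomposes exactly as $\{\sup_{k} W_k > \lambda\} = \{\tau < \infty\}$. Since $\tau$ is defined as the \emph{first} time the process strictly exceeds $\lambda$, on the event $\{\tau < \infty\}$ we have $W_\tau > \lambda$; this is the inequality that will let us trade a probability for an expectation.

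Next I would verify that the stopped process $\widetilde{W}_n \triangleq W_{\tau \wedge n}$ is itself a nonnegative supermartingale. Nonnegativity is inherited directly from $W_k$, and for the supermartingale inequality I would write the increment as $\widetilde{W}_{n+1} - \widetilde{W}_n = (W_{n+1} - W_n)\1_{\{\tau > n\}}$. Because the event $\{\tau > n\}$ is determined by $\mb{x}_{0:n}$, its indicator can be pulled outside the conditional expectation, giving $\E[\widetilde{W}_{n+1}\mid \mb{x}_{0:n}] - \widetilde{W}_n = \1_{\{\tau>n\}}\big(\E[W_{n+1}\mid \mb{x}_{0:n}] - W_n\big) \leq 0$ by \eqref{eq:supermartingale}. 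Taking total expectations then yields $\E[\widetilde{W}_n] \leq \E[\widetilde{W}_0] = \E[W_0]$ for every finite $n$.

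The finite-horizon bound follows by restricting $\widetilde{W}_n$ to the event $\{\tau \leq n\}$ and discarding the remainder using nonnegativity: $\E[W_0] \geq \E[\widetilde{W}_n] \geq \E[W_{\tau}\1_{\{\tau \leq n\}}] \geq \lambda\,\P\{\tau \leq n\}$, where the middle step uses $\tau \wedge n = \tau$ on $\{\tau \leq n\}$ and the last uses $W_\tau > \lambda$ there. Finally I would let $n \to \infty$: the events $\{\tau \leq n\}$ increase to $\{\tau < \infty\}$, so continuity of probability from below gives $\lambda\,\P\{\sup_k W_k > \lambda\} = \lambda\,\P\{\tau < \infty\} \leq \E[W_0]$, which is the claimed bound after dividing by $\lambda$.

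The step I expect to be the main obstacle is the interchange of the supremum over the infinite index set with the expectation: the maximal bound is elementary only at a fixed finite horizon, and the whole argument hinges on (i) confirming that the stopped process remains a supermartingale so that the fixed-horizon bound $\E[\widetilde{W}_n]\leq\E[W_0]$ is uniform in $n$, and (ii) justifying the monotone passage $n \to \infty$ through continuity of probability. Nonnegativity of $W_k$ is doing essential work in both the discard step and the limit, so I would take care to flag precisely where it is invoked; likewise, the strict-versus-nonstrict inequality at the threshold, namely ensuring $W_\tau > \lambda$ rather than merely $\geq \lambda$, is a small point worth stating explicitly so that the conversion to $\lambda\,\P\{\tau \leq n\}$ is airtight.
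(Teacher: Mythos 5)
Your argument is correct, and it is the standard first-passage/optional-stopping proof of Ville's maximal inequality. Note, however, that the paper itself offers no proof of this statement: it is quoted as a classical result and attributed to Ville (1939), so there is nothing internal to compare against. Judged on its own terms, your proposal is sound at every step: the identification $\{\sup_k W_k > \lambda\} = \{\tau < \infty\}$ is exact because a supremum strictly exceeds $\lambda$ iff some term does; the event $\{\tau > n\}$ is indeed determined by $\mb{x}_{0:n}$ (each $W_k = W(\mb{x}_k,k)$ is a function of $\mb{x}_k$), so the indicator factors out of the conditional expectation and the stopped process inherits the supermartingale property; the chain $\E[W_0] \geq \E[W_{\tau\wedge n}] \geq \E[W_\tau \1_{\{\tau\leq n\}}] \geq \lambda\,\P\{\tau \leq n\}$ uses nonnegativity exactly where you flag it; and the passage $n\to\infty$ by continuity from below is legitimate since $\{\tau\leq n\}\uparrow\{\tau<\infty\}$. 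The two points you single out as delicate --- uniformity of the finite-horizon bound in $n$, and the strict inequality $W_\tau > \lambda$ at the first passage --- are precisely the right ones to make explicit. The only cosmetic addition I would suggest is a one-line remark that $\E[W_{\tau\wedge n}] < \infty$ follows from $W_{\tau\wedge n} \leq \sum_{k=0}^n W_k$ and the integrability assumed in the paper's supermartingale definition, so that the conditional expectations you manipulate are well defined.
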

Intuitively, Ville's inequality can be compared with Markov's inequality for nonnegative random variables; since the process $W_k$ is nonincreasing in expectation, Ville's inequality allows us to control the probability the process instead moves upward above $\lambda$.

Lastly, as we will see when synthesizing safety-critical controllers in the presence of stochastic disturbances, we will need to enforce conditions on the expectation of a DCTBF. In doing so, we will need to relate the expectation of the DCTBF $h(\mb{x}_{k+1})$ to the expectation of the state $\mb{x}_{k+1}$. This will be achieved using Jensen's inequality:

\begin{theorem}[Jensen's Inequality \cite{liao2018sharpening}]
\label{thm:jensen}
    Consider a continuous function $h: \R^n \to \R$ and a random variable $\mb{x}$ that takes values in $\R^n$ with $\E[\Vert\mb{x}\Vert] < \infty$. We have that: 
    \begin{align}
        \begin{cases}
        \textrm{if $h$ is convex, }& \textrm{then }  \E[h(\mb{x})] \geq h(\E[\mb{x}]) ,\\
        \textrm{if $h$ is concave, } &\textrm{then }  \E[h(\mb{x})] \leq h(\E[\mb{x}]).  
        \end{cases} \label{prop:jensens}
    \end{align}
\end{theorem}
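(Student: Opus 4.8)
The statement to prove is Jensen's inequality, relating $\E[h(\mb{x})]$ and $h(\E[\mb{x}])$ for convex and concave $h$. The plan is to prove the convex case directly and then obtain the concave case for free: if $h$ is concave then $-h$ is convex, so applying the convex result to $-h$ and negating flips every inequality into the desired form. This lets me concentrate all the work on a single geometric fact, namely that a convex function lies on or above each of its supporting hyperplanes.

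First I would fix $\bs{\mu} \defeq \E[\mb{x}]$, which is a well-defined point of $\R^n$ precisely because the hypothesis $\E[\norm{\mb{x}}] < \infty$ guarantees each coordinate of $\mb{x}$ is integrable. Since $h$ is convex and finite on all of $\R^n$, convex analysis provides a subgradient $\mb{a} \in \R^n$ at $\bs{\mu}$, i.e.\ a supporting hyperplane
\[
    h(\mb{y}) \geq h(\bs{\mu}) + \mb{a}^\top(\mb{y} - \bs{\mu}) \quad \text{for all } \mb{y} \in \R^n.
\]
I would then evaluate this inequality at the random point $\mb{y} = \mb{x}$ and take expectations of both sides. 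Note $h(\mb{x})$ is measurable because $h$ is continuous and $\mb{x}$ is a random variable, so $\E[h(\mb{x})]$ is meaningful (possibly $+\infty$, which only helps). By linearity of expectation, and using that $\mb{a}^\top \mb{x}$ is integrable (again by $\E[\norm{\mb{x}}] < \infty$), the right-hand side collapses:
\[
    \E[h(\mb{x})] \geq h(\bs{\mu}) + \mb{a}^\top\big(\E[\mb{x}] - \bs{\mu}\big) = h(\bs{\mu}) = h(\E[\mb{x}]),
\]
since $\E[\mb{x}] = \bs{\mu}$ makes the linear correction term vanish. This establishes the convex case, and the reduction above delivers the concave case.

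The main obstacle is justifying the existence of the supporting hyperplane at $\bs{\mu}$; this rests on the standard result that a real-valued convex function on $\R^n$ admits a subgradient at every point of its domain, and here every point of $\R^n$ is interior, so no boundary subtleties arise. The only other care needed is bookkeeping on integrability: the hypothesis $\E[\norm{\mb{x}}] < \infty$ is exactly what controls both $\E[\mb{x}]$ and the linear term $\E[\mb{a}^\top \mb{x}]$, while the left-hand expectation $\E[h(\mb{x})]$ requires no finiteness assumption because if it were infinite the inequality would hold trivially.
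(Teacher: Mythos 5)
Your argument is correct: the supporting-hyperplane (subgradient) proof is the standard and complete route to Jensen's inequality, the reduction of the concave case to the convex case via $-h$ is valid, and your bookkeeping on integrability is right --- the subgradient bound $h(\mb{x}) \geq h(\bs{\mu}) + \mb{a}^\top(\mb{x}-\bs{\mu})$ even guarantees that $\E[h(\mb{x})]$ is well-defined in $(-\infty,+\infty]$, so the inequality is meaningful in all cases. Note, however, that the paper does not prove this theorem at all; it is imported as a cited background result, so there is no in-paper argument to compare against. The closest relative in the paper is the appendix proof of the Jensen-gap lemma, which takes a different, quantitative route --- a second-order Taylor expansion with a global Hessian bound $\Vert \nabla^2 h \Vert_2 \leq \lambda_{\max}$ yielding $\E[h(\mb{x})] \geq h(\E[\mb{x}]) - \tfrac{\lambda_{\max}}{2}\mathrm{tr}(\mathrm{cov}(\mb{x}))$; your first-order subgradient argument needs only convexity (no differentiability) but gives only the qualitative inequality, while the paper's second-order argument buys an explicit bound on the gap at the cost of stronger smoothness hypotheses.
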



\section{Safety of Discrete-Time Stochastic Systems}\label{sec:main_thm}

In this section we provide one of our main results in the form of a bound on the probability that a system with stochastic disturbances will exit a given superlevel set of a DTBF over a finite time horizon.

Consider the following modification of the DT system \eqref{eq:dt_dyn}:
\begin{align}
    \mb{x}_{k+1} = \mb{F}(\mb{x}_k, \mb{u}_k) + \mathbf{d}_k, \quad \forall k \in \mathbb{N},  \label{eq:dt_dyn_dist}
\end{align}
with $\mb{d}_k$ taking values in $\R^n$, and a closed-loop system: 
\begin{align}
    \mb{x}_{k+1} = \mb{F}(\mb{x}_k, \mb{k}(\mb{x}_k)) + \mb{d}_k, \quad \forall k \in \mathbb{N}.\label{eq:dt_autonomous_dist}
\end{align}
We assume that $\mb{x}_0$ is known and the disturbances $\mb{d}_k$ are a sequence of independent and identically distributed (with distribution $\mathcal{D}$) random variables\footnote{This implies the dynamics define a Markov process, i.e. $\E[h(\mb{F}(\mb{x}_k,\mb{u}_k) + \mb{d}_k)\mid\mb{x}_{0:k}] = \E[h(\mb{F}(\mb{x}_k,\mb{u}_k) + \mb{d}_k)\mid\mb{x}_k],$ since the state $\mb{x}_{k+1}$ at time $k+1$ only depends on the state $\mb{x}_k$, input $\mb{u}_k$, and disturbance $\mb{d}_k$ at time $k$.} with (potentially unbounded) support on $\R^n$, generating the random process $\mb{x}_{1:k}$. To study the safety of this system, we will use the following definition:

\begin{definition}[$K$-Step Exit Probability]
    Let $h:\R^n\to\R$ be a continuous function. For any $K\in\mathbb{N}$, $\gamma\in\R_{\geq 0}$, and initial condition $\mb{x}_0\in\R^n$, the $K$-step exit probability of the closed-loop system \eqref{eq:dt_autonomous_dist} is given by:
    \begin{align}
        P_u(K,\gamma,\mb{x}_0) =  \mathbb{P}\left\{ \min_{k \in \{0, \dots, K\}} h(\mb{x}_k) < - \gamma \right\}.
    \end{align}
\end{definition}
\noindent which describes the probability that the system will leave the $-\gamma$ superlevel set of $h$ within $K$ steps. This probability is directly related to the robust safety concept of Input-to-State Safety (ISSf)  \cite{kolathaya2018input} which reasons about the superlevel set of $h$ which is rendered safe in the presence of bounded disturbances. For the remainder of this work, we will omit the dependence of $P_u$ on $K$, $\gamma$, and $\mb{x}_0$ for notational simplicity.

\begin{remark}
     \textup{The finite time aspect of $K$-step exit probabilities is critical since systems exposed to unbounded disturbances 
     will exit a bounded set with probability $P_u = 1$ over an infinite horizon \cite{steinhardt2012finite, chern_safe_2021}. Intuitively, this is because a sufficiently large sample will eventually be drawn from the tail of the distribution that forces the system out in a single step.}
\end{remark}

Given this definition, we now provide one of our main results relating DTBFs to $K$-step exit probabilities. We note that this result is a reframing of the stochastic invariance theorem in \cite{kushner1967stochastic, santoyo_verification_2019}. Our reframing features three key components. First, we develop our results using the standard formulation of DTBFs covered in the background. Second, we produce a probability bound not only for $\C$ (defined as the 0-superlevel set of $h$, such that $\gamma = 0$), but for all non-positive superlevel sets of $h$ ($\gamma \geq 0$), a stochastic variant of ISSf \cite{kolathaya2018input}. Third, we present a complete proof of our result, with the goal of illuminating how to leverage tools from martingale theory to reason about the safety of discrete-time stochastic systems.

\begin{theorem}\label{thm:kushner_main}
   Let $h:\R^n \to \R$ be a continuous, upper-bounded function with upper bound $M\in\R_{>0}$. Suppose there exists an $\alpha\in (0,1)$ and a\footnote{The original presentation of Theorem \ref{thm:kushner_main} in \cite{kushner1967stochastic} considers variable $\delta_k$ for $k \in \{0, \dots, K\}$, which are known \textit{a priori}. In most practical applications, one assumes a lower bound that holds for all $\delta_k$, motivating our use of a constant $\delta$. Moreover, the use of a constant $\delta$ significantly clarifies the proof.} $\delta \leq M(1-\alpha)$ such that the closed-loop system \eqref{eq:dt_autonomous_dist} satisfies: 
    \begin{align}
        \mathbb{E}[~h(\mb{F}(\mb{x}, \mb{k}(\mb{x})) + \mb{d})  \mid \mb{x}~] \geq \alpha h(\mb{x})+ \delta,  \label{eq:kushner_constraint}
    \end{align}
    for all $\mb{x}\in \R^n$, with $\mb{d}\sim\mathcal{D}$. For any $K \in \mathbb{N}$ and $\gamma \in\R_{\geq 0}$, if $\delta < -\gamma(1 - \alpha)$, we have that:
    \begin{align}
    \label{eq:probup}
        P_u & \leq  \left( \frac{M - h(\mb{x}_0 )}{M + \gamma } \right)\alpha^K +  \frac{M (1 - \alpha) - \delta }{M + \gamma}\sum_{i =1}^K\alpha^{i-1}.  
    \end{align}
    Alternatively if $\delta \geq -\gamma (1 - \alpha) $, then: 
    \begin{align}
    \label{eq:problo}
        P_u\leq 1 - \frac{h(\mb{x}_0) + \gamma }{M+\gamma}\left(  \frac{M\alpha +\gamma + \delta}{M+ \gamma} \right)^K. 
    \end{align}
\end{theorem}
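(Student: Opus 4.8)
The plan is to reduce the statement to a single application of Ville's inequality by manufacturing an explicit nonnegative supermartingale out of $h$. First I would exploit the upper bound $M$ to pass to the nonnegative process $W_k \triangleq M - h(\mb{x}_k) \geq 0$, which recasts the exit event as an upward crossing: since $h(\mb{x}_k) < -\gamma \iff W_k > M+\gamma$, writing $\ell \triangleq M+\gamma$ gives $P_u = \P\{\max_{k\in\{0,\dots,K\}} W_k > \ell\}$. Next I would translate the expectation condition \eqref{eq:kushner_constraint} into an affine drift bound on $W_k$: using the Markov property (so conditioning on $\mb{x}_{0:k}$ reduces to conditioning on $\mb{x}_k$) and substituting $h(\mb{x}_k) = M - W_k$ yields $\E[W_{k+1}\mid\mb{x}_{0:k}] \leq \alpha W_k + c$ almost surely, with $c \triangleq M(1-\alpha) - \delta \geq 0$.

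The central obstacle is that $W_k$ is not itself a supermartingale: both the multiplicative factor $\alpha$ and the additive drift $c$ violate the supermartingale inequality, so Ville's inequality cannot be applied directly. My remedy is to build an auxiliary process $V_k \triangleq a_k W_k + e_k$ with deterministic, time-varying coefficients $a_k > 0$, $e_k \geq 0$, chosen so that (i) $V_k$ is a nonnegative supermartingale and (ii) $V_k > 1$ whenever $W_k > \ell$. Given the affine drift bound, condition (i) reduces to the backward recursions $a_k \geq \alpha\, a_{k+1}$ and $e_k \geq c\, a_{k+1} + e_{k+1}$, while condition (ii) reduces to the boundary requirement $a_k\ell + e_k \geq 1$. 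I would anchor the recursion at the terminal time with $a_K = 1/\ell$, $e_K = 0$ (so $V_K = W_K/\ell$ already meets (ii) with equality) and propagate backward.

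Here the two regimes emerge naturally from whether the drift fixed point $c/(1-\alpha)$ lies above or below the threshold $\ell$, equivalently from the sign of $\delta + \gamma(1-\alpha)$. When $\delta < -\gamma(1-\alpha)$ (large drift) I would make the slope constraint tight, $a_k = \alpha^{K-k}/\ell$, with $e_k = \tfrac{c}{\ell}\tfrac{1-\alpha^{K-k}}{1-\alpha}$; a short computation shows $a_k\ell + e_k \geq 1$ holds automatically because $c > \ell(1-\alpha)$, and evaluating $V_0$ yields \eqref{eq:probup}. When $\delta \geq -\gamma(1-\alpha)$ (small drift) this choice would violate (ii), so instead I would make the boundary constraint tight, $a_k\ell + e_k = 1$, which forces the geometric base to be $\beta \triangleq (M\alpha + \gamma + \delta)/(M+\gamma) = 1 - c/\ell \in [\alpha,1]$; the resulting $V(\mb{x},k) = 1 - \beta^{K-k}\tfrac{h(\mb{x})+\gamma}{M+\gamma}$ is checked to be a nonnegative supermartingale (the slope constraint $\beta \geq \alpha$ holds precisely in this regime), and its initial value gives \eqref{eq:problo}.

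Finally, with the supermartingale in hand I would conclude by applying Ville's inequality at level $\lambda = 1$ to the process stopped at time $K$: the stopped process $V_{k\wedge K}$ remains a nonnegative supermartingale with $\sup_k V_{k\wedge K} = \max_{k\leq K} V_k$, and since the exit event is contained in $\{\max_{k\leq K} V_k > 1\}$ we obtain $P_u \leq \E[V_0] = V_0$, which is exactly the claimed bound in each case. I expect the main work to lie in selecting the sequences $\{a_k\},\{e_k\}$ and verifying the supermartingale inequality for all $W_k \geq 0$ (not merely before exit); a minor technical point is the integrability $\E[|V_k|] < \infty$ required by the supermartingale definition, which follows from the finiteness of the conditional expectation assumed in \eqref{eq:kushner_constraint}.
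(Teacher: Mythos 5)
Your proposal is correct and is essentially the paper's proof: after normalizing by the Ville level, your process $V_k = a_k(M-h(\mb{x}_k)) + e_k$ in each case coincides exactly with the paper's supermartingale $W_k/\lambda^*$ evaluated at the optimal choice of its free parameter $\theta$ (namely $\theta=1/\alpha$ in Case 1 and $\theta = (M+\gamma)/(M+\gamma-\varphi)$ in Case 2), and your case split $\delta \lessgtr -\gamma(1-\alpha)$ matches the paper's. The only organizational difference is that you obtain the coefficients directly from the backward recursions $a_k \geq \alpha a_{k+1}$, $e_k \geq c\,a_{k+1}+e_{k+1}$ with the tight-slope versus tight-boundary dichotomy, rather than optimizing over $\theta$, which lets you bypass the monotonicity lemmas the paper relegates to its appendix.
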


\begin{remark}
    \textup{The upper bound $\delta \leq M(1-\alpha)$ is relatively non-restrictive, as not only is $\delta$ typically negative, but it must hold such that, in expectation, $h(\mb{x}_{k+1})$ cannot rise above the upper bound $M$ on $h$. The switching condition between \eqref{eq:probup} and \eqref{eq:problo} of $\delta = \gamma(1-\alpha)$ corresponds to whether, in expectation, the one-step evolution of the system remains in the set $\mathcal{C}_\gamma = \{ \mb{x} \in \R^n \mid h(\mb{x}) \geq - \gamma \}$ when it begins on the boundary of $\C_\gamma$.}
\end{remark}

To make our argument clear at a high level, we begin with a short proof sketch before proceeding in detail.

\textit{Proof sketch: } The key tool in proving Theorem \ref{thm:kushner_main} is Ville's inequality \eqref{eq:ville}. Since $h(\mb{x}_k)$, in general, is not a super- or submartingale, we will first construct a nonnegative supermartingale, $W_k \triangleq W(\mb{x}_k, k)$, by scaling and shifting $h(\mb{x}_k)$. We can then apply Ville's inequality \eqref{eq:ville} to bound the probability of $W_k$ going above any $\lambda > 0$. Next we find a particular value of $\lambda$, denoted $\lambda^*$, such that:
\begin{equation}
\max_{k \in \{0, \ldots, K\}} W_k \leq \lambda^* \implies \min_{k \in \{0, \ldots, K\}} h(\mb{x}_k) \geq -\gamma.
\end{equation}
Intuitively, this means that any sequence $W_k$ that  remains below $\lambda^*$ ensures that the corresponding sequence $h(\mb{x}_k)$ remains (safe) above $-\gamma$. This allows us to bound the $K$-step exit probability $P_u$ of our original process $h(\mb{x}_k)$ with the probability that $W_k$ will rise above $\lambda^*$:
\begin{align}
    P_u \leq \mathbb{P}\left\{\max_{k \in \{0, \ldots, K\}} W_k > \lambda^*\right\} \leq  \frac{\mathbb{E}[W_0]}{\lambda^*} = \frac{W_0}{\lambda^*},  
\end{align}
where the last equality will follow as it is assumed $\mb{x}_0$ is known \textit{a priori}. Particular choices of $W$ and $\lambda^*$ will yield the bounds stated in the theorem, completing the proof.

\subsection{Proof: Constructing a Nonnegative Supermartingale}

We will begin by constructing a nonnegative supermartingale, allowing us to use Ville's inequality. To construct this supermartingale, we first note that by rearranging terms in the inequality in \eqref{eq:kushner_constraint}, we can see the process $M - h(\mb{x}_k)$ resembles a supermartingale:
\begin{align}
  \E[M - h(\mb{x}_{k+1})\mid \mb{x}_k] &\leq \alpha (M - h(\mb{x}_k)) + M(1-\alpha) -\delta,\nonumber\\
  & \triangleq \alpha(M - h(\mb{x}_k)) + \varphi,  \label{eq:phidef}
\end{align}
but with a scaling $\alpha$ and additive term $\varphi \triangleq M(1-\alpha) - \delta$ that makes $\mathbb{E}\left[M - h(\mb{x}_{k+1}) \mid \mb{x}_k\right] \nleq M - h(\mb{x}_k)$ in general. To remove the effects of $\alpha$ and $\varphi$, consider the function $W: \R^n \times \mathbb{N} \to \R$ defined as: 
\begin{align}
    W(\mb{x}_k, k)  & \triangleq \underbrace{(M - h(\mb{x}_k))\theta^k}_\textrm{negate and scale} -\underbrace{\varphi\sum_{i=1}^{k} \theta^{i}}_{\textrm{cancel $\varphi$}} + \underbrace{\varphi \sum_{i=1}^{K} \theta^{i}}_{\textrm{ensure $W \geq 0$}},
    \label{eq:W_expanded}
\end{align}
where $\theta \in [1, \infty)$ will be used to cancel the effect of $\alpha$, but is left as a free variable that we will later use to tighten our bound on $P_u$. Denoting $W_k \triangleq W(\mb{x}_k,k)$, we now verify $W_k$ is a nonnegative supermartingale. We first show that $W_k \geq 0$ for all $k \in \{0, \dots, K\}$. Combining the two sums in \eqref{eq:W_expanded} yields:
\begin{align}
\label{eq:compactWk}
    W_k = (M - h(\mb{x}_k))\theta^k + \varphi \sum_{i=k+1}^K \theta^i,
\end{align}
which is nonnegative as $h(\mb{x}) \leq M$ for all $\mb{x} \in \R^n$, $\theta \geq 1$, and $\varphi \geq 0$ since $\delta \leq M(1-\alpha)$ by assumption. We now show that $W_k$ satisfies the supermartingale inequality \eqref{eq:supermartingale}:
\begin{align}
    \textcolor{black}{\mathbb{E}}&\textcolor{black}{\left[W_{k+1} \mid \mb{x}_{0:k} \right] = \E[W_{k+1} \mid \mb{x}_k ],} \label{eq:markov}\\ &= (M-\mathbb{E}[h(\mb{x}_{k+1})\mid \mb{x}_k])\theta^{k+1} + \varphi \sum_{i=k+2}^K \theta^i, \label{eq:wkp1def}\\
    &\leq (M - \alpha h(\mb{x}_k) - \delta)\theta^{k+1} + \varphi \sum_{i=k+2}^K \theta^i,\label{eq:w_barrier_cond}\\
    &= \alpha \theta (M - h(\mb{x}_k))\theta^k + \theta^{k+1}\underbrace{((1-\alpha) M - \delta)}_{=\varphi} + \varphi \sum_{i=k+2}^K \theta^i,\nonumber\\
    &= \underbrace{\alpha \theta}_{\text{req.} \leq 1} (M - h(\mb{x}_k)) \theta^k+ \varphi \sum_{i=k+1}^K \theta^i \leq W_k \label{eq:w_theta_const},
\end{align}
where \eqref{eq:markov} is due to the Markovian nature of system \eqref{eq:dt_autonomous_dist}, \eqref{eq:wkp1def} comes from using \eqref{eq:compactWk} to write $W_{k+1}$, \eqref{eq:w_barrier_cond} follows from \eqref{eq:kushner_constraint}, and \eqref{eq:w_theta_const} follows from the preceding line using the definition of $\varphi$ and assuming the further requirement that $\theta \leq \frac{1}{\alpha}$. Thus, we have shown that $W_k$ is a nonnegative supermartingale.

\subsection{Proof: Bounding the Exit Probability via Ville's Inequality}

 Since $W_k$ is a nonnegative supermartingale, we can apply Ville's inequality to establish: 
\begin{align}
\label{eq:villeB}
    \P \left\{ \max_{k \in \{ 0, \dots, K\} } W_k > \lambda  \right\} \leq \frac{\mathbb{E}[W_0] }{\lambda} = \frac{W_0}{\lambda}. 
\end{align}
for all $\lambda\in\R_{>0}$. 
To relate this bound to the $K$-step exit  probability $P_u$, we seek a value of $\lambda$, denoted $\lambda^*$, such that:
\begin{equation}
 \max_{k \in \{0, \ldots, K\}} W_k \leq \lambda^*. \implies \min_{k \in \{0, \ldots, K\}} h(\mb{x}_k) \geq -\gamma.
\end{equation}
In short, we will choose a value of $\lambda^*$ such that all trajectories of $W_k$ that remain below $\lambda^*$ must also have $h_k \geq -\gamma$. To this end, we use the geometric series identity\footnote{At $\theta =1 $, the fraction $\frac{1 - \theta^k}{1 - \theta}$ is not well defined. However, the proof can be carried out using the summation notation. In this case $\lambda^* = M + \gamma$, and \eqref{eq:villeB} yields $P_u \leq 1 - \frac{h(\mb{x}_0) + \gamma - \varphi K }{M + \gamma}$. } $\sum_{i=1}^k \theta^{i-1} =\frac{1 - \theta^k}{1 - \theta}$ to rewrite $W_k$ as:
\begin{align}
\label{eqn:wkgeomid}
    W_k &= (M - h(\mb{x}_k))\theta^k + \varphi \theta \frac{\theta^K - \theta^k}{\theta-1}.
\end{align}
Let us define:
\begin{align}
    \lambda_k = \left( \gamma + M - \frac{\varphi \theta}{\theta -1}\right)\theta^{k} + \frac{\varphi \theta}{\theta-1} \theta^K > 0, 
\end{align}
which, intuitively, applies the same time-varying scaling and shift to a constant, $-\gamma$, that was applied to $h(\mb{x}_k)$ to yield $W_k$ \eqref{eqn:wkgeomid}. Let us choose:
\begin{align}
    \lambda^* \triangleq \min_{k \in \{0, \ldots, K\}} \lambda_k.
\end{align}
Since we assume $\max_{k \in \{0, \ldots, K\}} W_k \leq \lambda^*,$ we can write, for all $k \in \{0, \ldots, K\}$:
\begin{align}
    0 &\geq W_k - \lambda^* \geq W_k - \lambda_k = (-\gamma - h_k) \theta^k.
\end{align}
Since $\theta > 1,$ this implies that $-\gamma - h_k \leq 0$ for all $k \in \{0, \ldots, K\}$, and thus $\min_{k \in \{0, \ldots, K\}} h(\mb{x}_k) \geq -\gamma,$ as needed.

\subsection{Proof: Choosing $\theta$ to Minimize the Ville's Bound}

Since our supermartingale $W_k$ includes a free parameter $\theta \in (1, \frac{1}{\alpha}]$, we will choose a value of $\theta$ in this interval which provide the tightest bound on $P_u$.

\textbf{Case 1: } Consider the first case where $\delta < -\gamma(1 - \alpha)$, implying $\varphi > (M + \gamma) (1 - \alpha) $. In this case $\frac{1}{\alpha} < \frac{M+\gamma}{M + \gamma - \varphi}$ and thus all of the allowable choices of $\theta \in (1, \frac{1}{\alpha})$ are such that $\theta < \frac{M+ \gamma}{M + \gamma - \varphi}$. Denoting $k^*$ such that $\lambda^* = \lambda_{k^*}$, we have that: 

\begin{align}
    \lambda^* &= \underbrace{\left( \gamma + M - \frac{\varphi \theta}{\theta -1}\right)}_{\leq 0 }\theta^{k^*} +  \frac{\varphi \theta}{\theta -1} \theta^{K}. \label{eq:case1}
\end{align}
\noindent Thus, we know $\min_{k \in \{0, \dots, K\}} \lambda_k $ occurs at $k^* = K$ and so:
\begin{align}
    P_u & \leq \frac{W_0}{\lambda^*} = \frac{M - h(\mb{x}_0)  + \frac{\varphi\theta}{\theta -1}\left( \theta^K - 1\right) }{(M + \gamma)\theta^K}. 
\end{align}
Since this bound is a decreasing function of $\theta$ (as shown in Lemma \ref{lm:decreasing} in Appendix \ref{apdx:kushner_lemmas}), we choose the largest allowable value $\theta^* = \frac{1}{\alpha}$ to achieve the bound: 
\begin{align}
    P_u & \leq \frac{W_0}{\lambda^* } =   \frac{M - h(\mb{x}_0) + \frac{\varphi }{1 - \alpha}\left( \alpha^{-K} -1 \right)  }{ (M + \gamma)\alpha^{-K} }, \\
    & = \left( \frac{M - h(\mb{x}_0) }{M + \gamma }\right) \alpha^K +  \frac{M (1-\alpha) - \delta}{M + \gamma}\sum_{i=1}^K \alpha^{i-1}, 
\end{align}
where we again use the geometric series identity.

\textbf{Case 2: } Now consider the second case where $\delta \geq -\gamma(1 - \alpha)$, so $\varphi \leq (M + \gamma) (1- \alpha) $, which implies that the set $[\frac{M+ \gamma}{M + \gamma - \varphi}, \frac{1}{\alpha}]$ is nonempty. Choosing a value of $\theta$ in this set ensures that: 
\begin{align}
    \lambda^* &= \underbrace{\left( \gamma + M - \frac{\varphi \theta}{\theta -1}\right)\theta^{k^*}}_{\geq 0 } +  \frac{\varphi \theta}{\theta -1} \theta^{K}.
\end{align}
\noindent Thus $\min_{k \in \{0, \dots, K\}} \lambda_k$ occurs at $k^* = 0 $ and:
\begin{align}
    P_u & \leq \frac{W_0}{\lambda } = \frac{(M - h(\mb{x}_0))  + \frac{\varphi\theta}{\theta -1}\left( \theta^K - 1\right) }{(M + \gamma) + \frac{\varphi \theta}{\theta - 1}\left( \theta^K - 1\right)},\\
    & = 1 -   \frac{h(\mb{x}_0) + \gamma }{ M + \gamma + \frac{\varphi \theta }{\theta - 1}\left( \theta^K -1 \right) }. 
\end{align}
Since this bound 
is increasing in $\theta$ (as shown in Lemma \ref{lm:increasing} in Appendix \ref{apdx:kushner_lemmas}), we choose 
$\theta^* = \frac{M+ \gamma}{M + \gamma - \varphi}$ to achieve the bound: 
\begin{align}
    P_u \leq 1 - \left( \frac{h(\mb{x}_0) + \gamma}{M + \gamma}\right)\left( \frac{M\alpha + \gamma + \delta}{M+ \gamma}\right)^K.
\end{align} 

If, alternatively, we choose $\theta \in \left(1,  \frac{M + \gamma }{M + \gamma - \varphi }\right] $, then the inequality in \eqref{eq:case1} holds, $k^* = K$, and the bound is decreasing in $\theta$ as in Case 1. Evaluating this bound for the minimizing value $\theta^* = \frac{M + \gamma}{M + \gamma - \varphi }$ again yields: 
\begin{align}
    P_u &\leq \frac{M - h(\mb{x}_0) + (M + \gamma) ( \theta^K - 1) }{(M + \gamma) \theta^K},\\
    & = 1 - \left(\frac{h(\mb{x}_0) + \gamma}{M + \gamma}\right)\left( \frac{M\alpha + \gamma + \delta}{M + \gamma }\right)^K. 
\end{align}
\hfill $\blacksquare$

\section{Practical Considerations for Enforcing Stochastic DTCBFs}

Theorem \ref{thm:kushner_main} allows us to reason about the finite-time safety of systems governed by DTBFs. To utilize the results of this theorem in a control setting, we aim to use DTCBFs to develop control methods which enforce the expectation condition:
\begin{align}
    \E[h(\mb{F}(\mb{x}_k, \mb{u}_k)+ \mb{d}_k) \mid \mb{x}_k] & \geq  \alpha h(\mb{x}_k). \label{eq:stochastic_dtcbf_constraint} 
\end{align}
Like the \ref{eq:dtcbfop} controller, we seek to enforce this constraint using an optimization-based controller that enforces safety while achieving pointwise minimal deviation from a nominal controller $\mb{k}_\textrm{nom}$ in the form of an \underline{E}xpectation-based \underline{D}TCBF \eqref{eq:dtcbf_op} Controller:
\begin{align}
    \mb{k}_{\textrm{ED}}(\mb{x}_k) = \argmin_{\mb{u} \in \R^m } \quad & \Vert \mb{u} - \mb{k}_{\textrm{nom}}(\mb{x}_k,k)  \Vert^2 \label{eq:dtcbf_op} \tag{ED}\\ 
    \textrm{s.t. } \quad & \E [h(\mb{F}(\mb{x}_k, \mb{u})+ \mb{d}_k) \mid \mb{x}_k  ] \geq \alpha h(\mb{x}_k). \nonumber
\end{align}

The expectation in \eqref{eq:dtcbf_op} adds complexity that is not generally considered in the application of deterministic DTCBFs. More commonly, CBF-based controllers solve ``certainty-equivalent'' optimization programs, like this \underline{C}ertainty-\underline{E}quivalent \underline{D}TCBF \eqref{eq:CE_dtcbf} controller, that replaces the expected barrier value $\E[h(\mathbf{x}_{k+1})\mid\mb{x}_k]$ with the barrier evaluated at the expected next state, $h(\E[\mb{x}_{k+1}\mid\mb{x}_k])$:
\begin{align}
    \mb{k}_\text{CED}(\mb{x}_k) = \argmin_{\mb{u} \in \R^m } \quad & \Vert \mb{u} - \mb{k}_{\textrm{nom}}(\mb{x}_k,k)  \Vert^2 \label{eq:CE_dtcbf} \tag{CED}\\ 
    \textrm{s.t. } \quad & h(\mb{F}(\mb{x}_k, \mb{u})+ \E[\mb{d}_k])  \geq \alpha h(\mb{x}_k). \nonumber
\end{align}
where $\E[\mb{F}(\mb{x}_k,\mb{u}_k)|\mb{x}_k] = \mb{F}(\mb{x}_k,\mb{u}_k)$ and $\E[\mb{d}_k|\mb{x}_k] = \E[\mb{d}_k]$. This constraint is often easier to evaluate than \eqref{eq:stochastic_dtcbf_constraint} since it allows control actions to be selected with respect to the expected disturbance $\E[\mb{d}_k]$ without needing to model the disturbance distribution $\mathcal{D}$. If the disturbance is zero-mean, then this form of the constraint is implicitly enforced by DTCBF controllers such as those presented in \cite{agrawal2017discrete, zeng2021safety}. However, when replacing \ref{eq:dtcbf_op} with \ref{eq:CE_dtcbf} it is important to consider the effect of Jensen's inequality in Theorem \ref{thm:jensen}.

If the ``certainty-equivalent'' constraint in \ref{eq:CE_dtcbf} is strictly concave\footnote{The constraint $h(\mb{x}_k + \mb{u}) \geq \alpha h(\mb{x}_k)$ is concave in $\mb{u}$ when $h$ is convex and it is convex in $\mb{u}$ when $h$ is concave. }, then we can apply the results of Theorem \ref{thm:kushner_main} directly since Jensen's inequality tightens the constraint and ensures satisfaction of the expectation condition \eqref{eq:kushner_constraint}. Unfortunately, using such a controller is a non-convex optimization program which can be impractical to solve. If, instead, the constraint is convex, then \ref{eq:CE_dtcbf} is a convex program, but does not necessarily enforce the expectation condition \eqref{eq:kushner_constraint} in Theorem \eqref{thm:kushner_main} due to the gap introduced by Jensen's inequality.

In order to apply the results of Theorem \ref{thm:kushner_main} to controllers of the form \eqref{eq:CE_dtcbf} with convex constraints, we must first provide a bound on the gap introduced by Jensen's inequality. In particular, for any concave function $h: \R^n \to \R $ and random variable $\mb{d} \sim \mathcal{D}$, we seek to determine a value $\psi \in \R_{\geq 0}$ such that, for all $\mb{x} \in \R^n$ and $\mb{u}\in\R^m$: 
\begin{align}
    \E[h(\mb{F}(\mb{x}, \mb{u}) +\mb{d}) \mid \mb{x}] \geq h(\mb{F}(\mb{x}, \mb{u}) + \E[\mb{d}]) - \psi, \label{eq:jensen_gap}
\end{align}
\noindent thus quantifying the gap introduced by Jensen's inequality. 

A large body of work has studied methods for finding the smallest possible $\psi$ that satisfies \eqref{eq:jensen_gap}. Here we adapt a result in \cite{becker2012variance} to achieve a relatively loose, but straightforward bound:

\begin{lemma}\label{lm:jensen_gap}
Consider a twice-continuously differentiable, concave function $h: \R^n\to \R$ with $\sup_{\mb{x} \in \R^n} \Vert \nabla^2 h(\mb{x})\Vert_2 \leq \lambda_{\max} $ for some $\lambda_{\max}\in\R_{\geq0}$, and a random variable $\mb{x}$ that takes values in $\R^n$ with $\E[\Vert \mb{x} \Vert] < \infty$ and $\Vert \textup{cov}(\mb{x}) \Vert < \infty$. Then we have that: 
    \begin{align}
        \E[h(\mb{x})] \geq h(\E[\mb{x}]) - \frac{\lambda_{\max}}{2} \textup{tr}(\textup{cov}(\mb{x})).
    \end{align}
\end{lemma}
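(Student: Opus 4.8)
The plan is to expand $h$ to second order about its mean and control the remainder using the uniform Hessian bound, so that Jensen's gap is captured by the variance. Write $\bs{\mu} \triangleq \E[\mb{x}]$, which is well-defined and finite since $\E[\norm{\mb{x}}] < \infty$. First I would invoke Taylor's theorem with a second-order remainder about $\bs{\mu}$: for each fixed $\mb{x} \in \R^n$ there is a point $\bs{\xi}$ on the segment between $\mb{x}$ and $\bs{\mu}$ with
\begin{equation}
h(\mb{x}) = h(\bs{\mu}) + \grad h(\bs{\mu})^\top(\mb{x}-\bs{\mu}) + \tfrac{1}{2}(\mb{x}-\bs{\mu})^\top \grad^2 h(\bs{\xi})(\mb{x}-\bs{\mu}).
\end{equation}
To avoid a measurable selection of $\bs{\xi}$, I would instead work with the integral form of the remainder, $R(\mb{x}) \triangleq \int_0^1 (1-t)(\mb{x}-\bs{\mu})^\top \grad^2 h(\bs{\mu}+t(\mb{x}-\bs{\mu}))(\mb{x}-\bs{\mu})\,dt$, which is a genuine measurable function of $\mb{x}$ and satisfies $h(\mb{x}) = h(\bs{\mu}) + \grad h(\bs{\mu})^\top(\mb{x}-\bs{\mu}) + R(\mb{x})$.

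Next I would establish the pointwise bound on the remainder. Since $h$ is concave, $\grad^2 h \preceq 0$ everywhere, and the hypothesis $\norm{\grad^2 h(\cdot)}_2 \le \lambda_{\max}$ forces every eigenvalue of the Hessian to lie in $[-\lambda_{\max}, 0]$. Hence for any vector $\mb{v}$ and any argument, $-\lambda_{\max}\norm{\mb{v}}^2 \le \mb{v}^\top \grad^2 h(\cdot)\mb{v} \le 0$; applied inside the integral (and using $\int_0^1 (1-t)\,dt = \tfrac12$) this yields $R(\mb{x}) \ge -\tfrac{\lambda_{\max}}{2}\norm{\mb{x}-\bs{\mu}}^2$ for all $\mb{x}$. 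Taking expectations of the Taylor identity, the linear term drops out because $\E[\mb{x}-\bs{\mu}] = \mb{0}$, leaving $\E[h(\mb{x})] = h(\bs{\mu}) + \E[R(\mb{x})]$, and combining with the remainder bound gives $\E[h(\mb{x})] \ge h(\bs{\mu}) - \tfrac{\lambda_{\max}}{2}\E[\norm{\mb{x}-\bs{\mu}}^2]$. Finally I would rewrite the second moment via the trace identity $\E[\norm{\mb{x}-\bs{\mu}}^2] = \E[\textup{tr}((\mb{x}-\bs{\mu})(\mb{x}-\bs{\mu})^\top)] = \textup{tr}(\textup{cov}(\mb{x}))$, which is finite by the assumption $\norm{\textup{cov}(\mb{x})} < \infty$, delivering the claimed inequality.

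The main obstacle is not the inequality itself but justifying that every expectation above is finite and that the interchange of expectation with the Taylor remainder is valid. I would handle this by noting that the uniform Hessian bound forces $h$ to grow at most quadratically, so $\lvert h(\mb{x})\rvert \le \lvert h(\bs{\mu})\rvert + \norm{\grad h(\bs{\mu})}\norm{\mb{x}-\bs{\mu}} + \tfrac{\lambda_{\max}}{2}\norm{\mb{x}-\bs{\mu}}^2$; since the first and second moments of $\mb{x}$ are finite, $\E[\lvert h(\mb{x})\rvert] < \infty$ and $\E[R(\mb{x})]$ is well-defined. Working with the measurable integral remainder $R$ sidesteps any measurable-selection issue for $\bs{\xi}$, so the concluding step reduces to integrating a genuine pointwise inequality. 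I would also remark that the argument needs only a bound on the spectral norm of the Hessian rather than the full machinery of \cite{becker2012variance}, which is precisely why the resulting gap $\psi = \tfrac{\lambda_{\max}}{2}\textup{tr}(\textup{cov}(\mb{x}))$ is loose but straightforward to evaluate.
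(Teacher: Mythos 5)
Your proof is correct and takes essentially the same route as the paper's: a second-order Taylor expansion of $h$ about $\E[\mb{x}]$, a bound on the quadratic remainder via the spectral norm of the Hessian (the paper phrases this through $\mathrm{tr}\left(\nabla^2\eta(\mb{c})\,\mb{e}\mb{e}^\top\right) \le \Vert\nabla^2\eta(\mb{c})\Vert_2\,\mathrm{tr}\left(\mb{e}\mb{e}^\top\right)$ for the convex function $\eta=-h$, which is the same estimate as your eigenvalue bound), followed by taking expectations so the linear term vanishes and $\E[\Vert\mb{x}-\E[\mb{x}]\Vert^2]=\mathrm{tr}(\mathrm{cov}(\mb{x}))$. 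Your use of the integral-form remainder to sidestep a measurable selection of the intermediate point, and your explicit integrability check, are minor technical refinements of the paper's mean-value-form argument but do not change its substance.
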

\noindent The proof is included in Appendix \ref{pf:jensen_gap}. We note that although this value of $\psi= \frac{\lambda_{\textup{max}}}{2}\textrm{tr}(\textrm{cov}(\mb{x}))$ is easy to interpret, tighter bounds exist which have less restrictive assumptions than a globally bounded Hessian \cite{liao2018sharpening}. We also note that one could also use sampling-based methods to approximately satisfy the constraint \eqref{eq:jensen_gap} by estimating $\psi$ empirically.

Next we present a controller which combines the mean-based control of the ``certainty equivalent'' \eqref{eq:CE_dtcbf} while also accounting for Jensen's inequality. This \underline{J}ensen-\underline{E}nhanced \underline{D}TCBF Controller (JED) includes an additional control parameter $c_\textup{J} \geq 0 $ to account for Jensen's inequality: 
\begin{align}
    \mb{k}_\textup{ED}(\mb{x}_k) = \argmin_{\mb{u} \in \R^m } \quad & \Vert \mb{u} - \mb{k}_{\textrm{nom}}(\mb{x}_k,k)  \Vert^2 \label{eq:jed} \tag{JED}\\ 
    \textrm{s.t. } \quad & h(\mb{F}(\mb{x}_k, \mb{u}_k)+ \E[\mb{d}_k])  - c_\textup{J} \geq \alpha h(\mb{x}_k). \nonumber
\end{align}

Given this controller and a method for bounding $\psi$, we can now apply Theorem \ref{thm:kushner_main} while accounting for (or analyzing) the effects of Jensen's inequality on the \eqref{eq:jed} controller: 

\begin{theorem}\label{thm:kushner_jensen}

Consider the system \eqref{eq:dt_autonomous_dist} and let $h:\R^n \to \R$ be a twice-continuously differentiable, concave function such that $\sup_{\mb{x} \in \R^n} h(\mb{x}) \leq M$ for $M\in\R_{>0}$ and  $\sup_{\mb{x} \in \R^n} \Vert \nabla^2 h(\mb{x}) \Vert_2 \leq \lambda_{\max} $ for $\lambda_{\max}\in\R_{\geq0}$. Suppose there exists an $\alpha \in (0,1)$ and a $c_\textup{J} \in [0, \frac{\lambda_\textup{max}}{2}\textup{tr(cov}(\mb{d}))+ M(1-\alpha) ]$ such that: 
\begin{align}
    h(\mb{F}(\mb{x}, \mb{k}(\mb{x})) + \mathbb{E}[\mb{d}] ) - c_\textup{J} \geq \alpha h(\mb{x}), \label{eq:jensen_dtcbf} 
\end{align}
for all $\mb{x} \in \R^n$ with $\mb{d}\sim\mathcal{D}$. Then we have that:
\begin{equation}
\mathbb{E}[~h(\mb{F}(\mb{x}, \mb{k}(\mb{x})) + \mb{d})  \mid \mb{x}~] \geq \alpha h(\mb{x})+ \delta,
\end{equation}
for all $\mb{x}\in\R^n$ with $\mb{d}\sim\mathcal{D}$ and $\delta = c_\textup{J} - \frac{\lambda_\textup{max}}{2}\textup{tr(cov}(\mb{d}_k))$. 
   
\end{theorem}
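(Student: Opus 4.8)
The plan is to read Theorem \ref{thm:kushner_jensen} as a bridge that converts the easily-enforced certainty-equivalent constraint \eqref{eq:jensen_dtcbf} into the true expectation condition \eqref{eq:kushner_constraint} demanded by Theorem \ref{thm:kushner_main}, using Lemma \ref{lm:jensen_gap} to pay for the Jensen gap. First I would fix an arbitrary $\mb{x} \in \R^n$ and introduce the random variable $\mb{y} \triangleq \mb{F}(\mb{x}, \mb{k}(\mb{x})) + \mb{d}$ with $\mb{d} \sim \mathcal{D}$. Since $\mb{d}$ is independent of the current state, the conditional law of $\mb{y}$ given $\mb{x}$ is simply that of $\mb{F}(\mb{x}, \mb{k}(\mb{x})) + \mb{d}$, where $\mb{F}(\mb{x}, \mb{k}(\mb{x}))$ is a deterministic offset. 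Consequently $\E[\mb{y} \mid \mb{x}] = \mb{F}(\mb{x}, \mb{k}(\mb{x})) + \E[\mb{d}]$ and, crucially, $\textup{cov}(\mb{y} \mid \mb{x}) = \textup{cov}(\mb{d})$, because a constant shift leaves the covariance unchanged. This identification is what allows $\textup{cov}(\mb{d})$ to surface directly in the final bound.

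Second, I would apply Lemma \ref{lm:jensen_gap} to the concave, twice-differentiable $h$ with the random variable $\mb{y}$ (conditioned on $\mb{x}$). The lemma's hypotheses transfer immediately: the globally bounded Hessian $\Vert \nabla^2 h \Vert_2 \leq \lambda_{\max}$ is a standing assumption of the theorem, and the finiteness of the first moment and covariance of $\mb{y}$ follows from the corresponding integrability of $\mb{d}$ that is implicit in the appearance of $\textup{cov}(\mb{d})$. The lemma then yields
\[
\E[h(\mb{y}) \mid \mb{x}] \geq h(\E[\mb{y} \mid \mb{x}]) - \frac{\lambda_{\max}}{2}\textup{tr}(\textup{cov}(\mb{d})),
\]
i.e. a lower bound on the true expected barrier value in terms of the certainty-equivalent value $h(\mb{F}(\mb{x}, \mb{k}(\mb{x})) + \E[\mb{d}])$.

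Third, I would substitute the hypothesis \eqref{eq:jensen_dtcbf}, rearranged as $h(\mb{F}(\mb{x}, \mb{k}(\mb{x})) + \E[\mb{d}]) \geq \alpha h(\mb{x}) + c_\textup{J}$, into the right-hand side and collect the constant terms. This produces $\E[h(\mb{F}(\mb{x}, \mb{k}(\mb{x})) + \mb{d}) \mid \mb{x}] \geq \alpha h(\mb{x}) + \delta$ with $\delta = c_\textup{J} - \frac{\lambda_{\max}}{2}\textup{tr}(\textup{cov}(\mb{d}))$, exactly the claimed inequality; since $\mb{x}$ was arbitrary, it holds for all $\mb{x} \in \R^n$.

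There is no real analytic obstacle in the chain of inequalities; the substantive content is interpretive bookkeeping, namely confirming that the prescribed interval $c_\textup{J} \in [0, \frac{\lambda_{\max}}{2}\textup{tr}(\textup{cov}(\mb{d})) + M(1-\alpha)]$ is precisely what makes the conclusion usable in Theorem \ref{thm:kushner_main}. The upper endpoint is equivalent to $\delta \leq M(1-\alpha)$, which is exactly that theorem's admissibility hypothesis on $\delta$, while the lower endpoint $c_\textup{J} \geq 0$ reflects that the \eqref{eq:jed} controller only ever tightens the certainty-equivalent constraint. The delicate point to state carefully is therefore not an estimate but the covariance identity $\textup{cov}(\mb{y} \mid \mb{x}) = \textup{cov}(\mb{d})$, which silently drives the whole argument and which I would flag explicitly.
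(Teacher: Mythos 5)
Your proposal is correct and follows essentially the same route as the paper: apply Lemma \ref{lm:jensen_gap} to the shifted random variable $\mb{F}(\mb{x},\mb{k}(\mb{x}))+\mb{d}$ (using that a deterministic shift leaves the covariance equal to $\textup{cov}(\mb{d})$), then substitute the certainty-equivalent hypothesis \eqref{eq:jensen_dtcbf} and collect constants to obtain $\delta = c_\textup{J} - \frac{\lambda_{\max}}{2}\textup{tr}(\textup{cov}(\mb{d}))$. Your explicit remarks on the conditional covariance identity and on why the interval for $c_\textup{J}$ matches the admissibility condition $\delta \leq M(1-\alpha)$ of Theorem \ref{thm:kushner_main} are accurate elaborations of what the paper leaves implicit.
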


\begin{proof}
    Given $\mb{x} \in \R^n $, Lemma \ref{lm:jensen_gap} ensures that:
    \begin{align}
        0 &\leq h(\mb{F}(\mb{x}, \mb{k}(\mb{x})) + \mathbb{E}[\mb{d}] )  - c_\textup{J} - \alpha h(\mb{x})\\
        & \leq  \E[h(\mb{F}(\mb{x}, \mb{k}(\mb{x})) + \mb{d} )\mid\mb{x}] +  \psi - c_\textup{J} -\alpha h(\mb{x})  
    \end{align}
    where $\psi = \frac{\lambda_\textup{max}}{2}\textup{tr(cov}(\mb{d})) $. Letting $\delta = c_\textup{J} -\frac{\lambda_\textup{max}}{2}\textup{tr(cov}(\mb{d}))$ yields the the desired result.  
\end{proof}

\section{Practical Examples}
\begin{figure}[t]
    \centering
    \includegraphics[width=\linewidth]{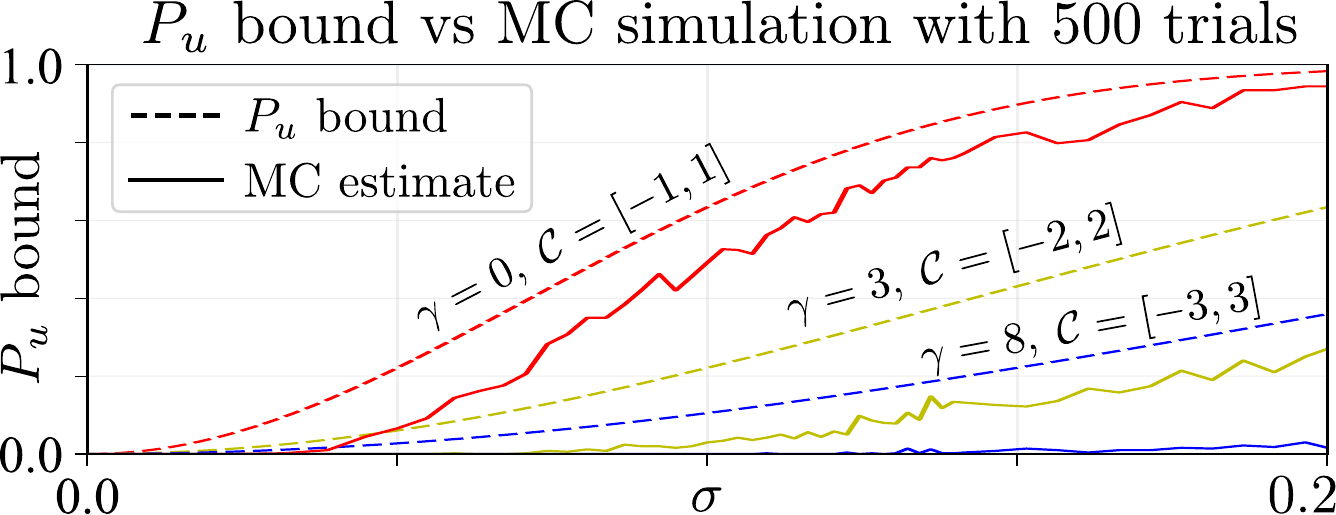}
    \caption{ The dashed lines represent the theoretical probability bounds for the system as in Theorem \ref{thm:kushner_main}. The solid lines represent the Monte Carlo (MC) estimated $P_u$ across 500 experiments.  }   
    \label{fig:steinhardt_comparison}
    \vspace{-1cm}
\end{figure}

In this section we consider a variety of simulation examples that highlight the key features of our approach.

\subsection{Linear 1D System}

Here we analyze our bounds by considering the case of unbounded i.i.d. disturbances $d_k \sim \mathcal{N}(0,1)$ for the one dimensional system ($x, u, \in \R$) and safe set: 
\begin{align}
    x_{k+1} = x_k + 2 + u_k + \sigma d_k \textrm{,  } \;\mathcal{C} = \{ x  \mid 1-x^2 \geq 0 \}.
\end{align}
   
The Jensen gap for this system and DTCBF is bounded by $\psi= \sigma^2$. For simulation, we employ the \ref{eq:jed} controller with $c_\textup{J}= \sigma^2$, $\alpha =1 - \sigma^2$, and nominal controller $\mb{k}_{\textrm{nom}}(\mb{x}_k, k) = 0 $. Figure \ref{fig:steinhardt_comparison} shows the results of 500 one second long trials run with a variety of  $\sigma \in [0, 0.2] $ and also displays how the bound on $P_u$ decreases as $\gamma$ increases.

\subsection{Simple Pendulum}
Next we consider an inverted pendulum about its upright equilibrium point with the DT dynamics: 
\begin{align}
    \lmat 
    \theta_{k+1}\\ 
    \dot{\theta}_{k+1} 
    \rmat 
    = 
    \lmat 
        \theta_k + \Delta t \dot{\theta}_k\\
        \dot{\theta}_k + \Delta t \sin(\theta_k) 
    \rmat 
    + 
    \lmat 
    0\\
    \Delta t \mb{u}
    \rmat 
    +  \mb{d}_k,
\end{align}
\noindent with time step $\Delta_t = 0.01 $ sec, i.i.d disturbances $\mb{d}_k \sim \mathcal{N}(\mb{0}_2,\textrm{Diag}([0.005^2, 0.025^2]]) $, and safe set\footnote{Diag$: \R^n \to \R^{n\times n}$ generates a square diagonal matrix with its argument along the main diagonal.}: 
\begin{align}
    \mathcal{C} = \bigg \{ \mb{x} \in \R^n ~\bigg|~ \underbrace{1 - \frac{6^2}{\pi^2} \mb{x}^\top \lmat 1 & 3^{-\frac{1}{2}} \\ 3^{-\frac{1}{2}} & 1    \rmat \mb{x}}_{h_\textrm{pend}(\mb{x})}  \geq 0  \bigg\}
\end{align}
\noindent which is constructed using the continuous-time Lyapunov equation as in \cite{taylor_safety_2022} and for which  $\vert \theta \vert  \leq \pi/6$ for all $\mb{x} \in \mathcal{C}$. Figure \ref{fig:pendulum} shows the results of 500 one second long trials for each $\mb{x}_0 \in \mathcal{C}$ using the \ref{eq:jed} controller with parameters $\alpha = 1 - \psi, c_\textup{J} = \psi $, where $\psi = \frac{\lambda_\textrm{max}}{2}\textrm{tr(cov(} \mb{d}_k)) $.  
This figure highlights the influence of $\mb{x}_0$ and shows how the bound on $P_u$ increases as $h(\mb{x}_0)$ decreases. 

\begin{figure}[t]
    \centering
    \includegraphics[width=\linewidth]{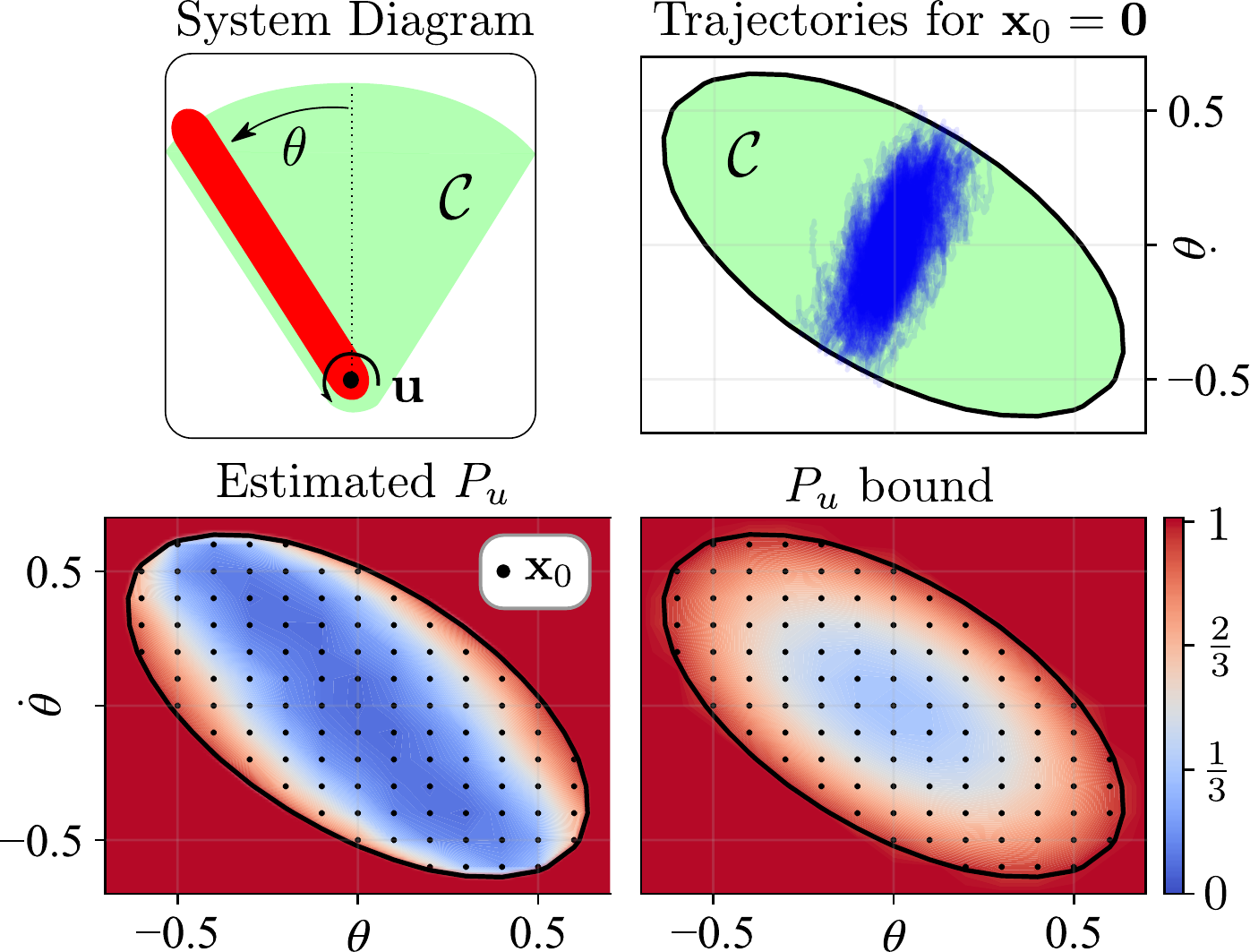}
    \caption{\textbf{(Top Left) }System diagram of the inverted pendulum. \textbf{(Top Right)} 500 one second long example trajectories starting at $\mb{x}_0 = 0 $. \textbf{(Bottom Left) } Monte Carlo estimates of $P_u$ for $\gamma = 0 $ using 500 one second long trials for each initial conditions represented by a black dot. \textbf{(Bottom Right) }  Our (conservative) theoretical bounds on $P_u$ from Theorem \ref{thm:kushner_main}    }
    \label{fig:pendulum}
    \vspace{-1cm}
\end{figure}

\subsection{Double Integrator}
\begin{figure}[t]
    \centering
    \includegraphics[width=0.97\linewidth]{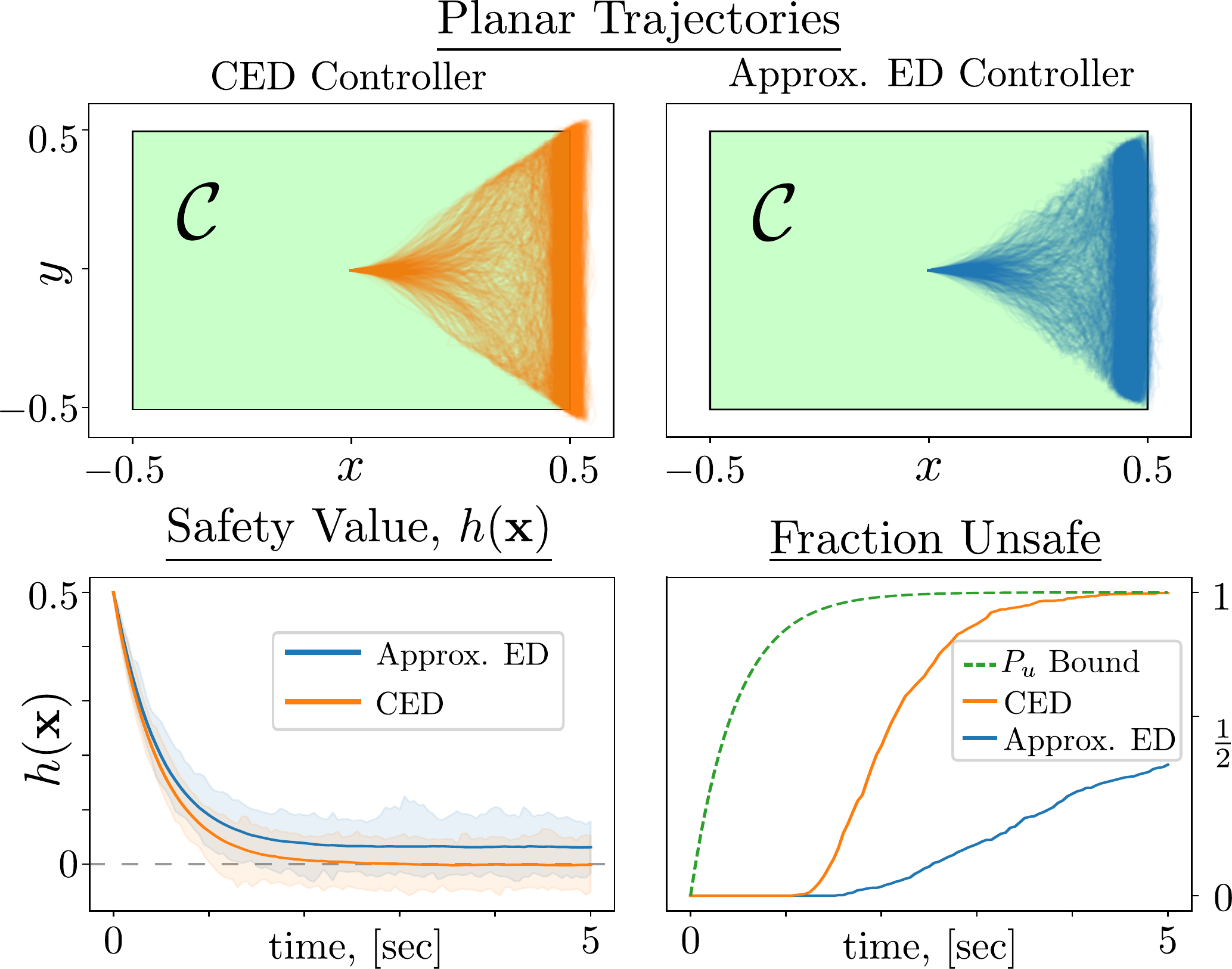}
    \caption{Simulation results for double integrator over $500$ trials. \textbf{(Top left):} Planar ($x,y$) trajectories for the approximated \ref{eq:dtcbf_op} controller, with the safe set (a unit square) plotted in green. \textbf{(Top right):} Planar ($x,y$) trajectories for a \ref{eq:CE_dtcbf} controller. \textbf{(Bottom left):} The $h(\mb{x}_k)$ for both controllers, with the max and min values shaded. \textbf{(Bottom right):} Percent of trajectories that have remained safe over time. We also plot our (conservative) bound \eqref{eq:problo} on the unsafe probability $P_u$.}
    \label{fig:double-integrator}
    \vspace{-0.6cm}
\end{figure}
We also consider the problem of controlling a planar system with unit-mass double-integrator dynamics to remain inside a convex polytope (in particular, a unit square centered at the origin). Using Heun's method, the 
dynamics are given by:
\begin{align}
    \mb{x}_{k+1} &= \left[\begin{array}{cc}\mb{I}_2 & \Delta t \; \mb{I}_2 \\ \mb{0}_2 & \mb{I}_2 \end{array}\right] \mb{x}_k + \left[\begin{array}{c}\frac{\Delta t^2}{2}\mb{I}_2\\  \Delta t \mb{I}_2\end{array}\right] \mb{u}_k + \mb{d}_k,\\
    &\triangleq \mb{A} \mb{x}_k + \mb{B} \mb{u}_k + \mb{d}_k, \label{eq:linear-gaussian}
\end{align}
where $\Delta t$ is the integration time step and $\mb{d}_k \sim \mathcal{N}(\mb{0}_4, \mb{Q})$ is a zero-mean Gaussian process noise added to the dynamics. 
Here we use $\Delta t = 0.05$ sec, and $\mb{Q} = \mb{B} \mb{B}^T$, which corresponds to applying a disturbance force $\mb{f}_k \sim \mathcal{N}(0, \mb{I}_2)$ to the system at each timestep. 

To keep the system inside a convex polytope, we seek to enforce the affine inequalities $\mb{C} \mb{x} \leq \mb{w}$ for $\mb{C} \in \mathbb{R}^{n_c \times n}, \mb{w} \in \mathbb{R}^{n_c}.$ Thus, we define our barrier $h(\mb{x}) = -\max(\mb{C}\mb{x} - \mb{w})$, where $\max(\cdot)$ defines the element-wise maximum, and $h(\mb{x}) \geq 0$ if and only if the constraint $\mb{C} \mb{x} \leq \mb{w}$ holds. Implementing the 
\ref{eq:dtcbf_op} controller for this system is non-trivial, since the expectation of $h(\mathbf{x})$ for a Gaussian-distributed $\mb{x}$ does not have a closed form. Similarly, implementing the \ref{eq:jed} controller to account for Jensen's inequality is non-trivial since $h$ is not twice continuously differentiable. We instead choose to enforce a conservative approximation of the barrier condition \eqref{eq:stochastic_dtcbf_constraint} using the \textit{log-sum-exp} function. As we show in Appendix \ref{apdx:polytope}, this approximation yields an analytic upper bound (derived using the moment-generating function of Gaussian r.v.s) on $\E[h(\mb{x}_{k+1})]$ which can be imposed via a convex constraint.

Figure \ref{fig:double-integrator} plots the results of 500 simulated trajectories for the double integrator system using the proposed \ref{eq:dtcbf_op} controller, and the certainty equivalent \ref{eq:CE_dtcbf} controller that neglects the presence of process noise. Both controllers have a nominal controller $\mb{k}_\text{nom}(\mb{x}) = \left[50, 0\right]$ which seeks to drive the system into the right wall. All trajectories start from the origin. We note the proposed controller is indeed more conservative than the \ref{eq:CE_dtcbf} controller, yielding both fewer and smaller violations of the safe set. In the bottom right, we also plot our bound as a function of the time horizon, which we note is quite conservative compared to our Monte Carlo estimate of the safety probability, motivating future work.

\subsection{Quadruped}
Finally, we consider the problem of controlling a simulated quadrupedal robot locomoting along a narrow path. The simulation is based on a Unitree A1 robot as shown in Figure \eqref{fig:hero_fig} which has 18 degrees of freedom and 12 actuators. An ID-QP controller designed using concepts in \cite{buchli2009inverse} and implemented at 1kHz is used to track stable walking gaits with variable planar velocities and angle rate using the motion primitive framework presented in \cite{Ubellacker2021}. We simulate the entire quadruped's dynamics at 1kHz, but follow a similar methodology to \cite{molnar_model-free_2022} and consider the following simplified discrete-time single-integrator system for DTCBF-based control: 
\begin{align}
    \mb{x}_{k+1}
    =  \mb{x}_k
    + \Delta t\lmat \cos\theta & - \sin \theta  & 0 \\ \sin \theta & \cos\theta & 0 \\ 0 & 0 & 1\rmat \lmat v^x_k \\ v^y_k \\ \theta_k \rmat + \mb{d}_k. \label{eq:reduced_order_quad}  
\end{align}
where $\mb{x}_k = \lmat x, & y, & \theta \rmat^\top $. In order to represent the error caused by uncertain terrain, zero mean Gaussian disturbances are added to the quadruped's $(x,y)$ body position and velocity with variances of $2.25\times 10^6$ and $0.01$ respectively. This random noise along with the dynamics-mismatch between the full-order quadrupedal dynamics and \eqref{eq:reduced_order_quad} is modeled as an i.i.d. random process $\mb{d}_k$. 

The quadruped is commanded to stand and then traverse a 7 meter path that is 1 meter wide, with the safe set
$ \mathcal{C} = \{ \mathbf{x} \in \R^n \mid 0.5^2 - y^2  \geq 0 \} $. For this simulation, three controllers are compared: a simple nominal controller $\mb{k}_\textrm{nom}(\mb{x})  = \lmat 0.2, & 0, & -\theta \rmat^\top $ with no understanding of safety, the \ref{eq:dtcbfop} controller with $\alpha = 0.99$, and our proposed \ref{eq:jed} controller with $\alpha =0.99$ and $c_\textup{J} = \psi$ using the mean and covariance estimates, $\E[\mb{d}_k] \approx  \lmat -0.0132, & -0.0034, & -0.0002 \rmat^\top$ and $\textrm{tr(cov}(\mb{d}_k)) \approx \psi = 0.000548$, which were generated using 15 minutes of 
walking data controlled by $\mb{k}_\textrm{nom}$.

The results of 50 trials for each controller can be seen in Figure \ref{fig:hero_fig}. As expected, $\mb{k}_\textrm{nom}$ generated the largest safety violations and \ref{eq:jed} the smallest and fewest safety violations.

\section{Conclusion} 
\label{sec:conclusion}
In this work, we developed a bound for the finite-time safety of stochastic discrete-time systems using discrete-time control barrier functions. Additionally, we presented a method for practically implementing convex optimization-based controllers which satisfy this bound by accounting for or analyzing the effect of Jensen's inequality. We presented several examples which demonstrate the efficacy of our bound and our proposed \ref{eq:dtcbf_op} and \ref{eq:jed} controllers,

This paper offers a large variety of directions for future work. In particular, in our practical examples, we find the safety bound presented here is often quite conservative in practice. One way forward would be to find other supermartingale transformations of the process $h(\mb{x}_k)$ (perhaps programatically, as in \cite{steinhardt2012finite}) that can yield tighter bounds than those in Theorem \ref{thm:kushner_main}. Another potential avenue may consider alternative martingale inequalities to the Ville's inequality used in this work. Another important open question is how to incorporate state uncertainty into our framework. This would allow us to reason about the safety of CBF-based controllers that operate in tandem with state estimators such as Kalman Filters or SLAM pipelines. Similarly, our methods may have interesting applications in handling the dynamics errors introduced in sampled-data control which can perhaps be modeled as a random variable or learned using a distribution-generating framework such as a state-dependent Gaussian processes or Bayesian neural networks. Finally, we assume that the disturbance distribution $\mathcal{D}$ is known exactly, \textit{a priori}; it would be interesting to consider a ``distributionally robust'' variant of the stochastic barrier condition \eqref{eq:stochastic_dtcbf_constraint} that can provide safety guarantees for a class of disturbance distributions.


\section*{Acknowledgments}
The authors would like to thank Alexander De Capone and Victor Dorobantu 
for their invaluable discussion and Joel Tropp for his course on Probability. The authors would also like to thank Wyatt Ubellacker for generously providing his fantastic quadruped simulation environment. 


\appendix

\subsection{Lemmas for Theorem \ref{thm:kushner_main}} \label{apdx:kushner_lemmas}
The following lemmas are used to prove optimality of the bound in Theorem \ref{thm:kushner_main} Cases 1 and 2. These lemmas were originally stated without proof in \cite{kushner1967stochastic}. 

\begin{lemma}\label{lm:decreasing}
    For $M\in\R_{>0}$, $ \gamma ,\varphi \in\R_{\geq 0}$ , $h(\mb{x}_0) \in [-\gamma, M]$, and $K \in \mathbb{N}_{\geq1}$, the function $\Psi_1:(1,\infty)\to\R$ defined as: 
    \begin{align}
        \Psi_1(\theta) = \frac{M - h(\mb{x}_0) + \frac{\varphi\theta }{\theta -1} \left( \theta^K -1 \right)  }{ (M + \gamma)\theta^K },  
    \end{align}
   is monotonically decreasing. 
\end{lemma}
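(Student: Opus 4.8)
The plan is to exploit the geometric-series identity to rewrite $\Psi_1$ as a nonnegative combination of non-increasing power functions, after which monotonicity becomes transparent upon termwise differentiation. First I would use $\frac{\theta}{\theta-1}\left(\theta^K-1\right) = \sum_{i=1}^{K}\theta^i$ (valid for $\theta > 1$) to express the numerator of $\Psi_1$ as $M - h(\mb{x}_0) + \varphi\sum_{i=1}^{K}\theta^i$. Dividing by the denominator $(M+\gamma)\theta^K$ then gives
\[
    \Psi_1(\theta) = \frac{1}{M+\gamma}\left[\left(M - h(\mb{x}_0)\right)\theta^{-K} + \varphi\sum_{i=1}^{K}\theta^{\,i-K}\right],
\]
a finite sum of monomials $c\,\theta^{-p}$ in which every exponent (namely $p = K$ and $p = K - i$ for $i \in \{1,\dots,K\}$) satisfies $p \geq 0$. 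This rewriting is the crux of the argument: it converts an opaque quotient into a manifestly non-increasing sum.

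Next I would read off the signs of the coefficients directly from the hypotheses. We have $M + \gamma > 0$ since $M > 0$ and $\gamma \geq 0$; $M - h(\mb{x}_0) \geq 0$ since $h(\mb{x}_0) \leq M$; and $\varphi \geq 0$ by assumption. Differentiating the displayed expression termwise yields $\frac{d}{d\theta}\theta^{-p} = -p\,\theta^{-p-1} \leq 0$ for every $p \geq 0$ and every $\theta \in (1,\infty)$, so each term contributes a nonpositive amount to $\Psi_1'(\theta)$. Since the prefactor $\frac{1}{M+\gamma}$ and the coefficients $M - h(\mb{x}_0)$ and $\varphi$ are all nonnegative, I conclude $\Psi_1'(\theta) \leq 0$ on $(1,\infty)$, establishing that $\Psi_1$ is monotonically decreasing.

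I do not anticipate a genuine obstacle here; the only mild subtlety is strict versus weak monotonicity in degenerate cases. Strictness holds whenever some coefficient of a genuinely decreasing term is positive, e.g. when $h(\mb{x}_0) < M$ (the $\theta^{-K}$ term) or when $\varphi > 0$ and $K \geq 2$ (the $\theta^{-p}$ terms with $p \geq 1$); in the remaining degenerate cases $\Psi_1$ is constant. In all cases the weak inequality $\Psi_1' \leq 0$ is exactly what is needed to justify selecting the largest admissible $\theta$ in Case 1 of Theorem \ref{thm:kushner_main}. This approach sidesteps the messier alternative of computing the quotient derivative $\frac{N'D - N D'}{D^2}$ and arguing its numerator is nonpositive by direct algebra.
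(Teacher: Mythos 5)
Your proposal is correct and follows essentially the same route as the paper's proof: both apply the geometric-series identity to rewrite $\Psi_1$ as $\frac{M - h(\mb{x}_0)}{M+\gamma}\theta^{-K} + \frac{\varphi}{M+\gamma}\sum_{i=1}^{K}\theta^{i-K}$ and then differentiate termwise, noting that every coefficient is nonnegative and every exponent is nonpositive. Your added remark on when the monotonicity is strict versus merely weak is a small bonus but not needed for the theorem.
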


\begin{proof}
The geometric series identity yields: 
\begin{align}
    \Psi_1(\theta) 
    & = \frac{M - h(\mb{x}_0) }{M + \gamma} \theta^{-K} + \frac{\varphi}{(M+ \gamma) }\sum_{i=1}^K \theta^{i-K},\\
    \frac{d \Psi_1}{d\theta} & =  - \frac{M - h(\mb{x}_0) }{M + \gamma }K \theta^{-K-1} - \varphi\sum_{i=1}^K \frac{(K -i )\theta^{i-K-1}}{M + \gamma}, \nonumber\\
    & \leq 0,
\end{align}
for all $\theta\in(1,\infty)$.
\end{proof}

\begin{lemma}\label{lm:increasing}
For $M\in\R_{>0}, \gamma,\varphi \in\R_{\geq 0} $, $h(\mb{x}_0) \in [-\gamma, M]$, and $K \in \mathbb{N}_{\geq 1} $, the function $\Psi_2:(1,\infty)\to\R$ defined as:  
    \begin{align}
        \Psi_2(\theta) & = 1 -   \frac{h(\mb{x}_0) + \gamma }{ M + \gamma + \frac{\varphi \theta }{\theta - 1}\left( \theta^K -1 \right) },  
    \end{align}
    is monotonically increasing. 
\end{lemma}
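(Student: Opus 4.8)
The plan is to show that $\Psi_2$ is monotonically increasing on $(1,\infty)$ by reducing the claim to a statement about the denominator. Since the numerator of the fraction, $h(\mb{x}_0) + \gamma$, is a nonnegative constant (because $h(\mb{x}_0) \geq -\gamma$ by assumption), and the overall expression is $1$ minus that constant divided by the denominator, $\Psi_2$ is increasing in $\theta$ precisely when the denominator $D(\theta) \triangleq M + \gamma + \frac{\varphi\theta}{\theta-1}(\theta^K - 1)$ is increasing in $\theta$. This is the key structural observation: the subtraction and the nonnegative numerator flip the direction, so a \emph{decreasing} $1/D$ gives an \emph{increasing} $\Psi_2$, which is exactly what an increasing $D$ provides.

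First I would isolate the only $\theta$-dependent part of $D$, namely $g(\theta) \triangleq \frac{\varphi\theta}{\theta-1}(\theta^K-1)$, since $M+\gamma$ is constant. Rather than differentiate the quotient directly (which is error-prone near $\theta = 1$), I would rewrite $g$ using the geometric series identity $\frac{\theta^K - 1}{\theta - 1} = \sum_{i=1}^K \theta^{i-1}$, exactly as was done for $\Psi_1$ in Lemma \ref{lm:decreasing}. This yields the clean form
\begin{align}
    g(\theta) = \varphi\,\theta \sum_{i=1}^K \theta^{i-1} = \varphi \sum_{i=1}^K \theta^{i},
\end{align}
which is manifestly a polynomial with nonnegative coefficients (as $\varphi \geq 0$).

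Next I would differentiate this polynomial form termwise to obtain $g'(\theta) = \varphi \sum_{i=1}^K i\,\theta^{i-1} \geq 0$ for all $\theta > 1$, so that $D'(\theta) = g'(\theta) \geq 0$ and hence $D$ is monotonically increasing. Since $D(\theta) > 0$ throughout $(1,\infty)$ (each summand is positive and $M+\gamma>0$), the map $\theta \mapsto \frac{h(\mb{x}_0)+\gamma}{D(\theta)}$ is monotonically nonincreasing, and therefore $\Psi_2(\theta) = 1 - \frac{h(\mb{x}_0)+\gamma}{D(\theta)}$ is monotonically increasing, as claimed.

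I do not anticipate a genuine obstacle here; the main thing to be careful about is avoiding the apparent singularity at $\theta = 1$ in the $\frac{\theta^K-1}{\theta-1}$ expression, which is precisely why rewriting $g$ as the summation $\varphi\sum_{i=1}^K\theta^i$ before differentiating is the cleanest route. One could alternatively differentiate the quotient form and verify the numerator of $g'$ is nonnegative, but that requires showing $(\theta-1)\bigl(K\theta^{K-1}(\theta-1)+(\theta^K-1)\bigr) - \theta(\theta^K-1) \geq 0$, an inequality that is tedious to establish directly and offers no real advantage over the polynomial argument above.
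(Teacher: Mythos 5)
Your proof is correct and follows essentially the same route as the paper's: both rewrite the denominator via the geometric series identity $\frac{\varphi\theta}{\theta-1}(\theta^K-1) = \varphi\sum_{i=1}^K\theta^i$ and then differentiate, the only cosmetic difference being that you differentiate the denominator alone while the paper differentiates $\Psi_2$ directly via the quotient rule. Note that, like the paper, you only establish a nonnegative derivative (monotone nondecreasing, with equality possible when $\varphi=0$ or $h(\mb{x}_0)=-\gamma$), which matches the paper's own level of rigor.
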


\begin{proof} The geometric series identity yields:
\begin{align}
    \Psi_2(\theta) 
    & = 1 -   \frac{h(\mb{x}_0) + \gamma }{ M + \gamma + \varphi\sum_{i=1}^K \theta^i  },  \\
    \frac{d\Psi_2}{d\theta} & = \frac{(h(\mb{x}_0) + \gamma) \left(\varphi \sum_{i=1}^K i \theta^{i-1}\right)}{\left( M + \gamma + \varphi \sum_{i=1}^K \theta^{i}\right)^2 },\\
    & \geq 0, 
\end{align}
for all $\theta\in(1,\infty)$.
\end{proof}

\subsection{Lemma \ref{lm:jensen_gap}}
Here we present a proof of Lemma \ref{lm:jensen_gap}. 

\begin{proof} \label{pf:jensen_gap}
Consider the convex, twice-continuously differentiable function $\eta:\R^n\to\R$ defined as $\eta = -h$. The intermediate value theorem implies that for all $\mb{y},\mb{z}\in\R^n$, there exists an $\omega \in [0,1]$ such that: 
\begin{align}
    \eta(\mb{z}) & = \eta(\mb{y}) + \nabla \eta(\mb{y})^\top \mb{e} + \frac{1}{2}\mb{e}^\top\nabla^2\eta(\mb{c}) \mb{e}, 
\end{align}
where $\mb{e}\triangleq \mb{z} - \mb{y} $,  $  \mb{c}  \triangleq \omega \mb{z} + (1- \omega) \mb{z}$, and $\nabla^2 \eta(\mb{c}) $ is the Hessian of $\eta$ evaluated 
at $\mb{c}$. We then have that: 
\begin{align}
            \eta(\mb{z})
            & = \eta(\mb{y}) + \nabla \eta(\mb{y})^\top \mb{e}  + \frac{1}{2}\textrm{tr}\left(\nabla^2\eta(\mb{c}) \mb{e}\mb{e}^\top \right),\\
            & \leq \eta(\mb{y}) + \nabla \eta(\mb{y})^\top \mb{e}  + \frac{1}{2}\Vert \nabla^2\eta(\mb{c})\Vert_2\textrm{tr}\left(\mb{e}\mb{e}^\top \right),\\
            & \leq  \eta(\mb{y}) + \nabla \eta(\mb{y})^\top \mb{e} + \frac{\lambda_\textrm{max} }{2}\textrm{tr}\left( \mb{e}\mb{e}^\top \right),
\end{align}
where the first inequality is a property of the trace operator for positive semi-definite matrices \cite{shebrawi2013trace} (and $\nabla^2\eta(\mb{c})$ is positive semi-definite as $\eta$ is convex), and the second inequality follows by our definition of $\lambda_{\rm max}$. Let $\mb{x}$ be a random variable taking values in $\R^n$ with probability density function $p: \R^n \to \R_{\geq 0}$, and let $\bs{\mu} \triangleq \E[\mb{x}]$. We then have that:
\begin{align}
    & \E[\eta(\mb{x})] - \eta(\E[\mb{x}]) = \int_{\R^n} (\eta(\mb{x}) - \eta(\bs{\mu}) )p(\mb{x}) d\mb{x}, \\
    & \leq \int_{\R^n }   \nabla \eta(\bs{\mu})^\top \mb{e} + \frac{\lambda_\textrm{max} }{2}\textrm{tr}\left( \mb{e}\mb{e}^\top \right)p(\mb{x}) d\mb{x},\\
    & = \frac{\lambda_\textrm{max} }{2} \textrm{tr}(\textup{cov}(\mb{x}) ), 
\end{align}
where $\mb{e} = \mb{x}-\bs{\mu}$. Replacing $\eta$ with $-h$ yields: 
\begin{align}
    \E[h(\mb{x})] \geq  h(\E[\mb{x}]) - \frac{\lambda_\textrm{max}}{2}\textrm{tr}(\textup{cov}(\mb{x}) ). 
\end{align}

\end{proof}

\subsection{Derivation of Convex Approximation for Polytopic Barrier}
\label{apdx:polytope}

Here we derive a conservative approximation of the constraint $\E[h(\mb{x}_{k+1})] \geq \alpha h(\mb{x}_k)$ for barriers of the form $h(\mb{x}) = -\max(\mb{C} \mb{x} - \mb{w})$ and systems with linear-Gaussian dynamics \eqref{eq:linear-gaussian}. The key idea is to use the \textit{log-sum-exp} function as a smooth, convex upper bound of the pointwise maximum in the barrier function, which yields a closed-form expression for Gaussian random variables.

In particular, if $L$ is the \textit{log-sum-exp} function, for any $t > 0$, $\max(\mb{x}) \leq \frac{1}{t} L(t\mb{x}) \triangleq \frac{1}{t} \log(\sum_{i=1}^n \exp(t x_i))$ \cite[Chapter~3]{boyd2014convex}. We can use this to upper bound the expectation of $-h$,
\begin{align}
    -\E[h(\mb{x}_{k+1})] &= \E\left[\max(\mb{C}\mb{x}_{k+1} - \mb{w})\right] \\ &\leq \frac{1}{t} \E\left[ L\Big(t(\mb{C} \mb{x}_{k+1} - \mb{w})\Big)\right] \\ &\leq \frac{1}{t}\log\left(\sum_{i=1}^{n_c}\E\left[\exp(t\mb{r}_i)\right]\right),
\end{align}
for $\mb{r}_i \triangleq \mb{c}_i^T \mb{x} - w_i$, where $\mb{c}_i$ is the $i^\text{th}$ row of $\mb{C}$, $w_i$ is the $i^\text{th}$ entry of $\mb{w}$, and the last inequality follows from Jensen's inequality and the concavity of the natural logarithm. Further, since we have linear-Gaussian dynamics, it is easy to show that $\mb{r}_i \sim \mathcal{N}(\mb{c}_i^T (\mb{A} \mb{x}_k + \mb{B} \mb{u}_k) - w_i, \mb{c}_i^T \mb{Q} \mb{c}_i)$. The expression $\E[\exp(t \mb{X})]$ is the ``moment-generating function'' of a random variable $\mb{X},$ and for a Gaussian r.v. $\mb{X} \sim \mathcal{N}(\mu, \sigma^2)$, it has a closed form, $\E[\exp(t \mb{X})] = \exp(t \mu + \frac{t^2}{2} \sigma^2)$ \cite[Chapter~6]{wackerly2014mathematical}. 

Thus, for $\bm{\mu} \triangleq \mb{C}(\mb{A} \mb{x}_{k} + \mb{B} \mb{u}_k) - \mb{w},$ $\bm{\sigma} \triangleq \diag(\mb{A} \mb{Q} \mb{A}^T)$, where $\diag(\cdot)$ defines the diagonal of a square matrix, 
\begin{align}
    -\E[h(\mb{x}_{k+1})] &\leq \frac{1}{t}L\left(t \bm{\mu} + \frac{t^2}{2} \bm{\sigma} \right),
\end{align}
which implies that imposing the constraint $\frac{1}{t} L(t \bm{\mu} + \frac{t^2}{2} \bm{\sigma}) \leq -\alpha h(\mb{x}_k)$ ensures that the stochastic barrier condition \eqref{eq:stochastic_dtcbf_constraint} is satisfied. Finally, recognizing that our constraint is a perspective transform of $L(\bm{\mu} + \frac{t}{2} \bm{\sigma})$ by the scalar $\frac{1}{t},$ which preserves convexity \cite[Chapter~3]{boyd2014convex}, our constraint is indeed convex. 
Thus an optimization-based controller such as \ref{eq:dtcbf_op} can be used online to select control actions, and can jointly optimize over $\mb{u}_k, t$ to obtain the tightest bound on the expectation possible.

\bibliographystyle{plainnat}
\bibliography{references, taylor}

\end{document}